\newcommand{\cmp}{Comm. Math. Phys.~}
\newcommand{\jmp}{J. Math. Phys.~}
\newcommand{\jpa}{J. Phys. A~}
\newcommand{\pra}{Phys. Rev. A~}
\definecolor{myurlcolor}{rgb}{0,0,0.7}% Color highlighted for numbers in citations and theorems
\newcommand{\red}{\textcolor{red}}
\newcommand{\proj}[1]{| #1\rangle\!\langle #1 |}
\newcommand{\tinyspace}{\mspace{1mu}}
\newcommand{\op}[1]{\operatorname{#1}}
\newcommand{\abs}[1]{\left\lvert\tinyspace #1 \tinyspace\right\rvert}
\newcommand{\norm}[1]{\left\lVert\tinyspace #1 \tinyspace\right\rVert}
\renewcommand{\det}{\operatorname{det}}
\renewcommand{\t}{{\scriptscriptstyle\mathsf{T}}}
\newcommand{\im}{\op{im}}
\newcommand{\sign}{\op{sign}}
\def\vol{\mathrm{vol}}
\def \dif {\mathrm{d}}
\def \diag {\mathrm{diag}}
\def \vol {\mathrm{vol}}
\def \re {\mathrm{Re}}
\def \im {\mathrm{Im}}
\def\complex{\mathbb{C}}
\def\real{\mathbb{R}}
\def\natural{\mathbb{N}}
\def\I{\mathbb{1}}
\newenvironment{mylist}[1]{\begin{list}{}{
    \setlength{\leftmargin}{#1}
    \setlength{\rightmargin}{0mm}
    \setlength{\labelsep}{2mm}
    \setlength{\labelwidth}{8mm}
    \setlength{\itemsep}{0mm}}}
    {\end{list}}
\def\ot{\otimes}
\newcommand{\inner}[2]{\langle #1 , #2\rangle}
\newcommand{\iinner}[2]{\langle #1 | #2\rangle}
\newcommand{\Inner}[2]{\left\langle #1 , #2\right\rangle}
\newcommand{\Innerm}[3]{\left\langle #1 \left| #2 \right| #3 \right\rangle}
\newcommand{\pa}[1]{(#1)}
\newcommand{\Pa}[1]{\left(#1\right)}
\newcommand{\Br}[1]{\left[#1\right]}
\newcommand{\set}[1]{\{#1\}}
\newcommand{\Set}[1]{\left\{#1\right\}}
\newcommand{\ket}[1]{|#1\rangle}
\DeclareMathOperator{\vectorize}{vec}
\newcommand{\col}[1]{\vectorize\pa{#1}}
\DeclareMathOperator{\trace}{Tr}
\newcommand{\Ptr}[2]{\trace_{#1}\Pa{#2}}
\newcommand{\Tr}[1]{\Ptr{}{#1}}
\def\cF{\mathcal{F}}\def\cH{\mathcal{H}}
\def\bP{\mathbf{P}}
\def\bss{\boldsymbol{s}}\def\bst{\boldsymbol{t}}
\def\bsu{\boldsymbol{u}}\def\bsv{\boldsymbol{v}}\def\bsx{\boldsymbol{x}}\def\bsy{\boldsymbol{y}}
\def\bsz{\boldsymbol{z}}
\def\rP{\mathrm{P}}
\def\rU{\mathrm{U}}\def\rV{\mathrm{V}}
\newtheorem{thrm}{Theorem}[section]
\newtheorem{prop}[thrm]{Proposition}
\newtheorem{cor}[thrm]{Corollary}
\theoremstyle{definition}
\newtheorem{definition}[thrm]{Definition}
\newtheorem{remark}[thrm]{Remark}
\newtheorem{exam}[thrm]{Example}
\numberwithin{equation}{section}
\newcounter{questionnumber}
\begin{document}

%========================================================================================%
\title{Dirac Delta Function of Matrix Argument}
%========================================================================================%

\author{Lin Zhang\footnote{E-mail: godyalin@163.com; linyz@zju.edu.cn}\\
  {\it\small Institute of Mathematics, Hangzhou Dianzi University, Hangzhou 310018, PR~China}}

\date{}
\maketitle \mbox{}\hrule\mbox{}
\begin{abstract}

Dirac delta function of matrix argument is employed frequently in
the development of diverse fields such as Random Matrix Theory,
Quantum Information Theory, etc. The purpose of the article is
pedagogical, it begins by recalling detailed knowledge about
Heaviside unit step function and Dirac delta function. Then its
extensions of Dirac delta function to vector spaces and matrix
spaces are discussed systematically, respectively. The detailed and
elementary proofs of these results are provided. Though we have not
seen these results formulated in the literature, there certainly are
predecessors. Applications are also mentioned.

\end{abstract}
\mbox{}\hrule\mbox{}

%============================================================================%
\section{Heaviside unit step function $H$ and Dirac delta function $\delta$}
%============================================================================%

The materials in this section are essential from Hoskins' Book
\cite{Hoskins2009}. There are also no new results in this section.
In order to be in a systematic way, it is reproduced here.

The Heaviside unit step function $H$ is defined as
\begin{eqnarray}
H(x) := \begin{cases} 1,& x>0,\\0,& x<0.\end{cases}
\end{eqnarray}
That is, this function is equal to 1 over $(0,+\infty)$ and equal to
0 over $(-\infty,0)$. This function can equally well have been
defined in terms of a specific expression, for instance
\begin{eqnarray}
H(x) = \frac12\Pa{1+\frac{x}{\abs{x}}}.
\end{eqnarray}
The value $H(0)$ is left undefined here. For all $x\neq0$,
$$
H'(x)=\frac{\dif H(x)}{\dif x}=0
$$
corresponding to the obvious fact that the graph of the function
$y=H(x)$ has zero slope for all $x\neq0$. Naturally we describe the
slope as "infinite" at origin. We shall denote by $\delta(x)$ the
derivative of $H(x)$:
$$
\delta(x)=H'(x)=0,\quad\forall x\neq0,
$$
and
$$
\delta(0) = +\infty.
$$
We recall the definition of Dirac delta function:
\begin{definition}[Dirac delta function]
Dirac delta function $\delta(x)$  is defined by
\begin{eqnarray}
\delta(x) = \begin{cases} +\infty,&\text{if } x=0;\\0, &\text{if }
x\neq 0.
\end{cases}
\end{eqnarray}
\end{definition}

\begin{prop}[Sampling property of the Dirac delta function]
If $f$ is any function which is continuous on a neighborhood of $0$,
then
\begin{eqnarray}
\int^{+\infty}_{-\infty} f(x)\delta(x)\dif x = f(0).
\end{eqnarray}
\end{prop}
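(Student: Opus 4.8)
The plan is to treat the identity not as an ordinary Riemann integral --- since $\delta$ is not a genuine function --- but as the limit of honest integrals against a family of ``nascent delta functions'' obtained by smoothing the Heaviside step. Concretely, for $n\in\natural$ let $H_n$ be the piecewise-linear approximation to $H$ that equals $0$ on $(-\infty,-\tfrac{1}{2n}]$, equals $1$ on $[\tfrac{1}{2n},+\infty)$, and interpolates linearly in between; then $H_n\to H$ pointwise for $x\neq 0$, and $\delta_n:=H_n'$ is the box function of height $n$ supported on $(-\tfrac{1}{2n},\tfrac{1}{2n})$. I would read the left-hand side of the claimed identity as $\lim_{n\to\infty}\int^{+\infty}_{-\infty} f(x)\delta_n(x)\dif x$, which is consistent with the description of $\delta$ as $H'$ recalled above.

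The core of the argument is then a one-line computation followed by a continuity estimate. We have
\[
\int^{+\infty}_{-\infty} f(x)\delta_n(x)\dif x \;=\; n\int^{1/(2n)}_{-1/(2n)} f(x)\dif x .
\]
Since $f$ is continuous on a neighbourhood of $0$, for all large $n$ it is continuous on $[-\tfrac{1}{2n},\tfrac{1}{2n}]$, and the first mean value theorem for integrals supplies a point $\xi_n\in[-\tfrac{1}{2n},\tfrac{1}{2n}]$ with $n\int^{1/(2n)}_{-1/(2n)} f(x)\dif x=f(\xi_n)$; since $\xi_n\to 0$, continuity of $f$ at $0$ gives $f(\xi_n)\to f(0)$, which is the assertion. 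Equivalently, and without invoking the mean value theorem: given $\varepsilon>0$ choose $\eta>0$ so that $\abs{f(x)-f(0)}<\varepsilon$ whenever $\abs{x}<\eta$; then for $n>1/\eta$,
\[
\abs{\,n\!\int^{1/(2n)}_{-1/(2n)}\! f(x)\dif x-f(0)\,}
= \abs{\,n\!\int^{1/(2n)}_{-1/(2n)}\!\bigl(f(x)-f(0)\bigr)\dif x\,}
\le n\!\int^{1/(2n)}_{-1/(2n)}\!\abs{f(x)-f(0)}\dif x<\varepsilon .
\]

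As a consistency check I would also record the purely formal derivation by integration by parts: writing $\delta=H'$,
\[
\int^{+\infty}_{-\infty} f(x)H'(x)\dif x=\Br{f(x)H(x)}^{+\infty}_{-\infty}-\int^{+\infty}_{-\infty} f'(x)H(x)\dif x = f(+\infty)-\int^{+\infty}_{0} f'(x)\dif x = f(0),
\]
but this presupposes that $f$ is differentiable and decays suitably at $+\infty$, hypotheses not granted here, so the limiting argument is what actually establishes the proposition under the stated assumption.

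The main obstacle is conceptual rather than computational: one should check that the value assigned to $\int f\,\delta$ is independent of the chosen smoothing, i.e.\ that any reasonable approximating family $\delta_n\to\delta$ --- box functions, triangular bumps, Gaussians $\sqrt{n/\pi}\,e^{-nx^2}$, Poisson kernels, and so on --- returns the same limit $f(0)$. At this pedagogical level I would fix the canonical box family above and observe that the identical $\varepsilon$-estimate goes through verbatim for any family of nonnegative functions with unit integral whose mass concentrates at $0$; the claim that this is ``the'' correct reading of the symbol $\int f\,\delta$ properly belongs to the theory of distributions, and I would flag it as such rather than prove it here.
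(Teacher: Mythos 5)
Your argument is correct, but it takes a different route from the paper, which in fact offers no proof of this proposition at all: the sampling property is simply recalled from Hoskins, and immediately afterwards the pairing $\Inner{\delta}{f}$ is \emph{defined} to be $\int^{+\infty}_{-\infty} f(x)\delta(x)\dif x = f(0)$, so within the paper the identity functions essentially as a stipulation attached to the symbolic ``definition'' $\delta(0)=+\infty$, $\delta(x)=0$ for $x\neq 0$ (which, taken literally, could never produce a nonzero integral). What you do instead is give the statement genuine mathematical content: you interpret $\int f\delta$ as $\lim_{n\to\infty}\int f\delta_n$ for the box family $\delta_n=H_n'$ obtained by smoothing the Heaviside step, and your $\varepsilon$--$\eta$ estimate (or the mean value theorem variant) is a complete and correct proof of that limit under the stated continuity hypothesis; your remark that the formal integration by parts needs extra differentiability and decay assumptions is also well taken, and flagging the independence of the approximating family as a distribution-theoretic matter rather than proving it is the right call at this level. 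In short, your approach buys an honest theorem where the paper has a recalled convention, at the modest cost of having to fix (and justify the choice of) a nascent-delta family; it is also closer in spirit to the paper's later manipulations, e.g.\ the limit argument used there to compute $\Inner{\delta'}{f}=-f'(0)$, so it fits the surrounding material well.
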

In fact, we have, for $a\neq0$,
$$
\int^{+a}_{-a}f(x)\delta(x)\dif x = f(0)
$$
and
$$
\int^a_{-\infty}\delta(x)\dif x=
\int^{+\infty}_{-\infty}H(a-x)\delta(x)\dif x = H(a).
$$
\begin{definition}
Assume that $f$ is continuous function which vanishes outside some
finite interval. There corresponds a certain number which we write
as $\Inner{H}{f}$, given by
\begin{eqnarray}
\Inner{H}{f}: = \int^{+\infty}_{-\infty} f(x)H(x)\dif x =
\int^{+\infty}_0 f(x)\dif x.
\end{eqnarray}
Similarly, $\Inner{\delta}{f}$ is given by
\begin{eqnarray}
\Inner{\delta}{f}: = \int^{+\infty}_{-\infty} f(x)\delta(x)\dif x =
f(0).
\end{eqnarray}
\end{definition}
For an ordinary function $f$ and a fixed $a\in\real$, the symbol
$f_a$ denotes the translation of $f$ with respect to $a$:
\begin{eqnarray}
f_a(x):=f(x-a).
\end{eqnarray}
Thus
\begin{eqnarray}
\Inner{\delta_a}{f} = \int^{+\infty}_{-\infty} f(x)\delta_a(x)\dif x
= \int^{+\infty}_{-\infty} f(x)\delta(x-a)\dif x =f(a).
\end{eqnarray}
From the above discussion, it is easy to see that
\begin{eqnarray}
f(x)\delta(x-a) = f(x)\delta_a(x) = f(a)\delta(x).
\end{eqnarray}
This fact will be used later.
\begin{prop}
If $f$ is any function which has a continuous derivative $f'$, at
least in some neighborhood of $0$, then
\begin{eqnarray}
\Inner{\delta'}{f}: = \int^{+\infty}_{-\infty} f(x)\delta'(x)\dif x
= -\Inner{\delta}{f'} = -f'(0).
\end{eqnarray}
\end{prop}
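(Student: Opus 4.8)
The plan is to prove the identity by integration by parts, leaning on the facts already established in this section: that $\delta = H'$, and the sampling property of $\delta$ (the earlier Proposition). Since $f$ is assumed to have a continuous derivative near $0$ and (as in the preceding Definition) to vanish outside a finite interval, all boundary contributions at $\pm\infty$ will disappear.

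First I would write the left-hand side and integrate by parts, treating $\delta'$ as the derivative of $\delta$:
$$
\int_{-\infty}^{+\infty} f(x)\delta'(x)\,\dif x = \Br{f(x)\delta(x)}_{x=-\infty}^{x=+\infty} - \int_{-\infty}^{+\infty} f'(x)\delta(x)\,\dif x.
$$
Next I would argue that the boundary term vanishes: because $\delta(x)=0$ for every $x\neq 0$, and in particular for all $x$ outside any neighborhood of the origin, the product $f(x)\delta(x)$ is identically $0$ near $\pm\infty$, so $\Br{f(x)\delta(x)}_{-\infty}^{+\infty}=0$. Finally I would apply the sampling property to the surviving integral: since $f'$ is continuous on a neighborhood of $0$,
$$
\int_{-\infty}^{+\infty} f'(x)\delta(x)\,\dif x = f'(0) = \Inner{\delta}{f'},
$$
and combining the two displays gives $\Inner{\delta'}{f} = -f'(0) = -\Inner{\delta}{f'}$, as claimed.

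The one genuinely delicate point — the main obstacle in any fully rigorous account — is the justification of the integration-by-parts step itself, since $\delta$ is not an ordinary function and the manipulation above is formal. In the distributional framework this identity is essentially taken as the \emph{definition} of $\delta'$; alternatively it can be justified by a regularization, replacing $H$ by a smooth approximation $H_\varepsilon$ with $H_\varepsilon' = \delta_\varepsilon \to \delta$, performing classical integration by parts for each $\varepsilon>0$ (where the boundary term vanishes because $f$ has compact support), and then letting $\varepsilon\to 0$ using continuity of $f'$ at $0$. Consistent with the pedagogical tone of the surrounding material, I would present the formal one-line computation as the proof and relegate the $\delta_\varepsilon$-regularization to a remark as the rigorous backing.
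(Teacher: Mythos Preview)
Your argument is correct, but it differs from the paper's. You use formal integration by parts, discarding the boundary term $[f(x)\delta(x)]_{-\infty}^{+\infty}$ and then invoking the sampling property on $f'$. The paper instead realizes $\delta'$ as the limit of a difference quotient, $\delta'(x)=\lim_{\epsilon\to0}\epsilon^{-1}(\delta(x)-\delta(x-\epsilon))$, applies the sampling property separately to $\delta$ and to the translate $\delta_\epsilon$ to obtain $\epsilon^{-1}(f(0)-f(\epsilon))$, and then passes to the limit to reach $-f'(0)$. Your route is the standard distributional one and generalizes immediately to $\delta^{(n)}$ by iteration; its cost is that the integration-by-parts step is purely formal and needs either the distributional definition or a regularization to be rigorous, as you note. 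The paper's route avoids integration by parts altogether and uses only the sampling property (for $\delta$ and $\delta_a$) already established in the section, together with the ordinary limit definition of $f'(0)$; its formal step is the interchange of the limit in $\epsilon$ with the integral, which is of comparable delicacy.
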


\begin{proof}
Since
\begin{eqnarray*}
\int^{+\infty}_{-\infty}
f(x)\frac{\delta(x)-\delta(x-\epsilon)}{\epsilon}\dif x &=&
\frac1{\epsilon}\Pa{\int^{+\infty}_{-\infty} f(x)\delta(x)\dif x -
\int^{+\infty}_{-\infty} f(x)\delta(x-\epsilon)\dif x}\\
&=& \frac{f(0)-f(\epsilon)}{\epsilon} =
-\frac{f(\epsilon)-f(0)}{\epsilon},
\end{eqnarray*}
it follows that
\begin{eqnarray}
\Inner{\delta'}{f} &=& \int^{+\infty}_{-\infty} f(x)\delta'(x)\dif
x=\int^{+\infty}_{-\infty}
f(x)\lim_{\epsilon\to0}\frac{\delta(x)-\delta(x-\epsilon)}{\epsilon}\dif
x\\
&=&\lim_{\epsilon\to0}\int^{+\infty}_{-\infty}
f(x)\frac{\delta(x)-\delta(x-\epsilon)}{\epsilon}\dif x\\
&=&\lim_{\epsilon\to0}-\frac{f(\epsilon)-f(0)}{\epsilon} =
-f'(0)=-\Inner{\delta}{f'}.
\end{eqnarray}
This completes the proof.
\end{proof}

\begin{prop}
The $n$-th derivative of the Dirac delta function, denoted by
$\delta^{(n)}$, is defined by the following:
\begin{eqnarray}
\Inner{\delta^{(n)}}{f} = (-1)^nf^{(n)}(0),
\end{eqnarray}
where $n\in\natural_+$ and $f$ is any function with continuous
derivatives at least up to the $n$-th order in some neighborhood of
$0$.
\end{prop}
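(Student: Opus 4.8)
The plan is to prove the formula $\Inner{\delta^{(n)}}{f} = (-1)^n f^{(n)}(0)$ by induction on $n$, mirroring the argument already used for $\delta'$ in the preceding proposition. First I would establish the base case $n=1$, which is precisely the statement $\Inner{\delta'}{f} = -f'(0)$ just proved. For the inductive step, I would define $\delta^{(n)}$ as the distributional derivative of $\delta^{(n-1)}$, i.e. via the difference-quotient limit
\begin{eqnarray*}
\Inner{\delta^{(n)}}{f} = \int^{+\infty}_{-\infty} f(x)\,\delta^{(n)}(x)\dif x = \lim_{\epsilon\to0}\int^{+\infty}_{-\infty} f(x)\,\frac{\delta^{(n-1)}(x)-\delta^{(n-1)}(x-\epsilon)}{\epsilon}\dif x,
\end{eqnarray*}
exactly as $\delta'$ was obtained from $\delta$.

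Next I would interchange the limit with the integral (justified as in the $\delta'$ proof, treating these as formal manipulations on test functions $f$ with sufficiently many continuous derivatives near $0$) and split the integral into two pieces. Using the translation identity $\Inner{\delta^{(n-1)}_\epsilon}{f} = \int f(x)\delta^{(n-1)}(x-\epsilon)\dif x$, together with the inductive hypothesis applied to both $f$ and its translate — more precisely, $\int f(x)\delta^{(n-1)}(x-\epsilon)\dif x = (-1)^{n-1} f^{(n-1)}(\epsilon)$, which follows from the induction hypothesis after the change of variable $x\mapsto x+\epsilon$ — I would reduce the expression to
\begin{eqnarray*}
\Inner{\delta^{(n)}}{f} = \lim_{\epsilon\to0} (-1)^{n-1}\,\frac{f^{(n-1)}(0) - f^{(n-1)}(\epsilon)}{\epsilon} = (-1)^{n-1}\cdot\Pa{-\frac{\dif}{\dif x}f^{(n-1)}(x)\Big|_{x=0}} = (-1)^n f^{(n)}(0).
\end{eqnarray*}
This closes the induction.

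Alternatively, and perhaps more cleanly, I would phrase the inductive step as a single application of the $\delta'$ proposition: since $\Inner{\delta^{(n)}}{f} = \Inner{(\delta^{(n-1)})'}{f} = -\Inner{\delta^{(n-1)}}{f'}$ by the definition of the derivative of a distribution, the induction hypothesis applied to the function $f'$ (which has continuous derivatives up to order $n-1$ near $0$) gives $-\Inner{\delta^{(n-1)}}{f'} = -(-1)^{n-1}(f')^{(n-1)}(0) = (-1)^n f^{(n)}(0)$. This version makes the bookkeeping essentially trivial once the integration-by-parts-type identity $\Inner{g'}{f} = -\Inner{g}{f'}$ is granted for the relevant distributions $g$.

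The main obstacle is not the algebra but the rigor of the limit-integral interchange and the very definition of $\delta^{(n)}$ as a pointwise object: in this pedagogical, non-distributional framing, $\delta^{(n)}$ has no honest pointwise values, so the manipulations must be read as shorthand for the action on test functions, exactly in the spirit of the earlier propositions. I would therefore be careful to state that $f$ is assumed to have continuous derivatives up to order $n$ in a neighborhood of $0$ (so that each $f^{(k)}(\epsilon)\to f^{(k)}(0)$ and the final difference quotient converges to $f^{(n)}(0)$), and otherwise follow the same formal template already sanctioned in the text.
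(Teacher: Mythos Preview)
Your induction argument is correct and well-organized; either version (the explicit difference-quotient computation or the cleaner ``integration-by-parts'' step $\Inner{(\delta^{(n-1)})'}{f} = -\Inner{\delta^{(n-1)}}{f'}$) closes the induction without difficulty, and your remarks about the regularity needed on $f$ and the formal nature of the manipulations are appropriate for this pedagogical setting.

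The only thing to note is that the paper does not actually give a proof of this proposition at all: it states the identity as a \emph{definition} of $\delta^{(n)}$ (``is defined by the following'') and then immediately moves on. So your proposal is not so much an alternative route as a genuine derivation where the paper is content to take the formula as the defining property. What your approach buys is that it shows the formula is forced once one accepts the $n=1$ case and the natural recursive definition $\delta^{(n)} := (\delta^{(n-1)})'$, rather than being an independent stipulation.
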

From these, we see that
\begin{eqnarray}
\Inner{\delta'_a}{f} &=& -\Inner{\delta_a}{f'} = -f'(a),\\
\Inner{\delta^{(n)}_a}{f} &=& (-1)^n\Inner{\delta_a}{f^{(n)}} =
(-1)^nf^{(n)}(a).
\end{eqnarray}
Suppose that $g(t)$ increases monotonely over the closed interval
$[a,b]$: suppose there is $t_0\in(a,b)$ such that $g(t_0)=0$. We
have known that
$$
\frac{\dif g^{-1}(x)}{\dif x} = \frac1{\frac{\dif g(t)}{\dif t}}.
$$
From this, we see that, via $x=g(t), t\in[a,b]$,
\begin{eqnarray}
\int^b_a f(t)\delta(g(t))\dif t &=&
\int^{g(b)}_{g(a)}f(g^{-1}(x))\delta(x)\dif (g^{-1}(x))\\
&=&
\int^{g(b)}_{g(a)}f(g^{-1}(x))\delta(x)\frac{\dif g^{-1}(x)}{\dif x}\dif x\\
&=& f(g^{-1}(0))\frac{\dif g^{-1}(x)}{\dif x}\big|_{x=g(t_0)} =
\frac{f(t_0)}{g'(t_0)}.
\end{eqnarray}

\begin{prop}
If $g(t)$ is monotone, with $g(a)=0$ and $g'(a)\neq0$, then
\begin{eqnarray}
\delta(g(t)) = \frac{\delta_a(t)}{\abs{g'(a)}}.
\end{eqnarray}
\end{prop}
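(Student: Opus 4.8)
The plan is to test the claimed identity against an arbitrary test function $f$ (continuous, vanishing outside a finite interval) and invoke the definition of equality of generalized functions used throughout this section: $\delta(g(t))$ and $\delta_a(t)/\abs{g'(a)}$ are equal exactly when $\Inner{\delta\circ g}{f} = \Inner{\delta_a}{f}/\abs{g'(a)} = f(a)/\abs{g'(a)}$ for every such $f$. So it suffices to show that $\int^{+\infty}_{-\infty} f(t)\,\delta(g(t))\,\dif t = f(a)/\abs{g'(a)}$.

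First I would localize. Since $g$ is monotone with $g(a)=0$ and $g'(a)\neq 0$, the point $a$ is the \emph{only} zero of $g$, so one can pick $b_1 < a < b_2$ with $g$ injective and continuously differentiable on $[b_1,b_2]$; outside a neighborhood of $a$ the integrand vanishes because $\delta(g(t))=0$ there. Hence $\int^{+\infty}_{-\infty} f(t)\delta(g(t))\dif t = \int^{b_2}_{b_1} f(t)\delta(g(t))\dif t$, which is precisely the situation handled by the change-of-variable computation preceding the proposition.

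Next comes the case analysis on the direction of monotonicity. If $g$ is increasing, the substitution $x=g(t)$ yields $\int^{b_2}_{b_1} f(t)\delta(g(t))\dif t = \int^{g(b_2)}_{g(b_1)} f(g^{-1}(x))\,\delta(x)\,(g^{-1})'(x)\,\dif x = f(a)\,(g^{-1})'(0) = f(a)/g'(a)$, exactly as derived above, and since $g'(a)>0$ this equals $f(a)/\abs{g'(a)}$. If $g$ is decreasing, the same substitution reverses the orientation of the interval, so the limits become $g(b_1)>0>g(b_2)$; swapping them to put them in increasing order produces one minus sign, while $(g^{-1})'(0)=1/g'(a)<0$ contributes a second minus sign, and the two cancel, giving $f(a)/\abs{g'(a)}$ once more. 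In both cases $\Inner{\delta\circ g}{f} = f(a)/\abs{g'(a)}$.

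Finally, since $f(a)/\abs{g'(a)} = \Inner{\delta_a}{f}/\abs{g'(a)} = \Inner{\delta_a/\abs{g'(a)}}{f}$ for every admissible $f$, the two generalized functions coincide, which is the asserted identity $\delta(g(t)) = \delta_a(t)/\abs{g'(a)}$. I expect the only genuinely delicate point to be the bookkeeping of the \emph{two} sign reversals in the decreasing case, together with justifying the localization step — that is, using monotonicity of $g$ (not merely $g'(a)\neq 0$) to guarantee $a$ is the unique zero, so that no other contributions appear.
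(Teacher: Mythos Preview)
Your proposal is correct and follows essentially the same route as the paper: the change of variables $x=g(t)$ together with the sampling property of $\delta$, exactly the computation displayed just before the proposition. In fact you are more careful than the paper, which carries out the calculation only for increasing $g$ (obtaining $f(t_0)/g'(t_0)$) and then simply states the result with $\abs{g'(a)}$; your explicit treatment of the decreasing case, tracking the two sign reversals from the swapped integration limits and from $(g^{-1})'(0)=1/g'(a)<0$, supplies precisely the step the paper leaves to the reader.
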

From this, we see that
$$
\delta(kx+b) = \frac1{\abs{k}}\delta\Pa{x+\frac bk},\quad k\neq0.
$$
More generally, the delta distribution may be composed with a smooth
function $g(x)$ in such a way that the familiar change of variables
formula holds, that
\begin{eqnarray}
\int_\real f(g(x))\delta(g(x))\abs{g'(x)}\dif x =
\int_{g(\real)}f(t)\delta(t)\dif t
\end{eqnarray}
provided that $g$ is a continuously differentiable function with
$g'$ nowhere zero. That is, there is a unique way to assign meaning
to the distribution $\delta\circ g$ so that this identity holds for
all compactly supported test functions $f$. Therefore, the domain
must be broken up to exclude the $g'(x) = 0$ point. This
distribution satisfies $\delta(g(x)) = 0$ if $g$ is nowhere zero,
and otherwise if $g$ has a real root at $x_0$, then
\begin{eqnarray}
\delta(g(x)) = \frac{\delta(x-x_0)}{\abs{g'(x_0)}}.
\end{eqnarray}
It is natural to define the composition $\delta(g(x))$ for
continuously differentiable functions $g$ by
\begin{eqnarray}
\delta(g(x)) = \sum_j \frac{\delta(x-x_j)}{\abs{g'(x_j)}}
\end{eqnarray}
where the sum extends over all roots of $g(x)$, which are assumed to
be simple. Thus, for example
\begin{eqnarray}
\delta\Pa{x^2-a^2} = \frac1{2\abs{a}}\Pa{\delta(x+a)+\delta(x-a)}.
\end{eqnarray}
In the integral form the generalized scaling property may be written
as
\begin{eqnarray}
\int^{+\infty}_{-\infty}f(x)\delta(g(x))\dif x = \sum_j
\frac{f(x_j)}{\abs{g'(x_j)}}.
\end{eqnarray}

\begin{exam}[\cite{Attila2017}]
If $T$ is an $n\times n$ positive definite matrix and
$r\in\real^+,\alpha\geqslant0$, then
\begin{eqnarray}
\int_{B_n(T,r)} \Pa{r - \Innerm{\bsu}{T}{\bsu}}^\alpha [\dif \bsu] =
\frac{\pi^n\Gamma(\alpha+1)}{\Gamma(n+\alpha+1)}\frac{r^{n+\alpha}}{\det(T)},
\end{eqnarray}
where $B_n(T,r):=\Set{\bsu\in\complex^n|\Innerm{\bsu}{T}{\bsu}<r}$
and $[\dif\bsu]=\prod^n_{j=1}\dif u_j$ for $[\dif
z]=\dif\Pa{\re(z)}\dif\Pa{\im(z)}$.

Indeed, let
$$
\bsv=r^{-\frac12}T^{\frac12}\bsu.
$$
Then $[\dif\bsv] = \det(r^{-1}T)[\dif\bsu]$ or $[\dif\bsu] =
\det(rT^{-1})[\dif\bsv]$. Thus
\begin{eqnarray}
\int_{B_n(T,r)} \Pa{r - \Innerm{\bsu}{T}{\bsu}}^\alpha [\dif \bsu] =
\frac{r^{n+\alpha}}{\det(T)}\int_{B_n(\I_n,1)} \Pa{1 -
\iinner{\bsv}{\bsv}}^\alpha[\dif\bsv],
\end{eqnarray}
where $B_n(\I_n,1)=\Set{\bsv\in\complex^n| \iinner{\bsv}{\bsv}<1}$.
Now
$$
B_n(\I_n,1)=\cup_{\gamma\in[0,1)}S_n(\gamma),
$$
where $S_n(\gamma)=\Set{\bsv\in\complex^n| \norm{\bsv}_2=\gamma}$.
\begin{eqnarray}
\int_{B_n(\I_n,1)} \Pa{1 - \iinner{\bsv}{\bsv}}^\alpha [\dif\bsv]
&=& \int^1_0\dif \gamma \int\delta(\gamma-\norm{\bsv}_2)\Pa{1 -
\iinner{\bsv}{\bsv}}^\alpha[\dif\bsv]\\
&=&\int^1_0 \dif\gamma
(1-\gamma^2)^\alpha\cdot \int\delta(\gamma-\norm{\bsv}_2)[\dif\bsv]\\
&=&\int^1_0 \dif\gamma (1-\gamma^2)^\alpha\vol(S_n(\gamma)),
\end{eqnarray}
where
$$
\vol(S_n(\gamma)) =\int\delta(\gamma-\norm{\bsv}_2)[\dif\bsv]=
\frac{2\pi^n}{\Gamma(n)}\gamma^{2n-1}.
$$
Therefore, we obtain that
\begin{eqnarray}
\int_{B_n(T,r)} \Pa{r - \Innerm{\bsu}{T}{\bsu}}^\alpha [\dif \bsu]
&=&\frac{2\pi^n}{\Gamma(n)}\frac{r^{n+\alpha}}{\det(T)}\int^1_0(1-\gamma^2)^\alpha
\gamma^{2n-1}\dif\gamma\\
&=&\frac{\pi^n}{\Gamma(n)}\frac{r^{n+\alpha}}{\det(T)}\int^1_0(1-\gamma^2)^\alpha
\gamma^{2n-2}\dif(\gamma^2),
\end{eqnarray}
where
$$
\int^1_0(1-\gamma^2)^\alpha \gamma^{2n-2}\dif(\gamma^2)
=\int^1_0(1-x)^\alpha x^{n-1}\dif x =
B(\alpha+1,n)=\frac{\Gamma(\alpha+1)\Gamma(n)}{\Gamma(n+\alpha+1)}.
$$
This completes the proof.
\end{exam}

\begin{definition}[Convolution]
An operation on functions, called \emph{convolution} and denoted by
the symbol $*$, is defined by:
\begin{eqnarray}
f*g(x) = \int_\real f(x-t)g(t)\dif t = \int_\real f(t)g(x-t)\dif t.
\end{eqnarray}
\end{definition}
It is easily seen that some properties of convolution:
\begin{enumerate}[(i)]
\item $\delta_a*\delta_b=\delta_{a+b}$.
\item The delta function as a convolution unit: $\delta* f=f*\delta=f$.
\item Convolution as the translation: $\delta_a*f=f*\delta_a=f_a$, where $f_a(x):=f(x-a)$.
\item $\delta^{(n)}*f= f*\delta^{(n)}=f^{(n)}$.
\end{enumerate}

\begin{definition}[Fourier transform]
Let $f$ be a complex-valued function of the real variable $t$ which
is absolutely integrable over the whole real axis $\real$. That is,
\begin{eqnarray*}
\int_\real \abs{f_1(x)}<+\infty\quad\text{and}\quad\int_\real
\abs{f_2(x)}<+\infty,
\end{eqnarray*}
where $f=f_1+\sqrt{-1}f_2$. We define the Fourier transform of $f$
to be a new function
\begin{eqnarray}
\widehat f(\omega) :=\cF(f)(\omega)=\int_\real e^{-\mathrm{i}\omega
t}f(t)\dif t.
\end{eqnarray}
\end{definition}

Next, we consider the Fourier integral representation of Dirac delta
function which is very powerful in applications. We use the
following standard result:
\begin{eqnarray}
\rP\rV\int^{+\infty}_{-\infty}\frac{e^{\mathrm{i}\omega x}}{x}\dif x
= \mathrm{i}\pi
\end{eqnarray}
where the symbol $\rP\rV$ denotes the Cauchy Principal Value of the
integral and $\omega>0$ a constant. That is,
\begin{eqnarray}
\int^{+\infty}_{-\infty}\frac{\sin(\omega x)}x \dif x=\pi
\quad\text{and}\quad \int^{+\infty}_{-\infty}\frac{\cos(\omega x)}x
\dif x=0.
\end{eqnarray}
Replacing $\omega$ by $-\omega$ simply changes the sign of the first
of these two real integrals and leaves the other unaltered. That is,
if $\omega>0$
\begin{eqnarray}
\rP\rV\int^{+\infty}_{-\infty}\frac{e^{-\mathrm{i}\omega x}}{x}\dif
x = -\mathrm{i}\pi
\end{eqnarray}
Hence, if we replace $\omega$ by the usual symbol $t$ for the
independent real variable we can write
\begin{eqnarray}
\rP\rV\int^{+\infty}_{-\infty}\frac{e^{\mathrm{i}t x}}{x}\dif x =
\mathrm{i}\pi\sign(t),
\end{eqnarray}
i.e.,
\begin{eqnarray}
\frac1{2\pi}\int^{+\infty}_{-\infty}\frac{e^{\mathrm{i}t
x}}{\mathrm{i}x}\dif x =\frac12\sign(t)=\begin{cases}
\frac12,&t>0\\-\frac12,&t<0\end{cases}
\end{eqnarray}
A formal differentiation of this with respect to $t$ then yields the
following result:
\begin{prop}
It holds that
\begin{eqnarray}\label{eq:inverse-of-delta}
\delta(t)=\frac1{2\pi}\int^{+\infty}_{-\infty}e^{\mathrm{i}t x}\dif
x.
\end{eqnarray}
\end{prop}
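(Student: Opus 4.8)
The plan is to read off \eqref{eq:inverse-of-delta} by differentiating, with respect to the parameter $t$, the identity
\begin{eqnarray*}
\frac1{2\pi}\int^{+\infty}_{-\infty}\frac{e^{\mathrm{i}t x}}{\mathrm{i}x}\dif x = \frac12\sign(t)
\end{eqnarray*}
established just above. First I would rewrite the right-hand side via the Heaviside function: since $\sign(t)=2H(t)-1$ for $t\neq0$, we have $\frac12\sign(t)=H(t)-\frac12$, whose derivative is $H'(t)=\delta(t)$ by the very definition of $\delta$ adopted in Section~1. On the left-hand side, differentiating under the integral sign and using $\frac{\dif}{\dif t}\frac{e^{\mathrm{i}tx}}{\mathrm{i}x}=e^{\mathrm{i}tx}$ produces $\frac1{2\pi}\int^{+\infty}_{-\infty}e^{\mathrm{i}tx}\dif x$. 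Equating the two derivatives gives the claimed formula. This is exactly the ``formal differentiation'' alluded to in the paragraph preceding the statement.

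The delicate point is that this differentiation under the integral sign is purely formal: the integral $\int^{+\infty}_{-\infty}e^{\mathrm{i}tx}\dif x$ does not converge as an ordinary integral, and the whole manipulation must be understood distributionally, consistently with the pairings $\Inner{\delta}{f}$ and $\Inner{H}{f}$ introduced earlier. The honest justification is to test both sides against an arbitrary $f$ that is continuous, compactly supported and as smooth as needed, and to invoke Fourier inversion: after exchanging the order of integration one gets $\frac1{2\pi}\int^{+\infty}_{-\infty}\Pa{\int^{+\infty}_{-\infty}e^{\mathrm{i}tx}\dif t}f(x)\dif x=\frac1{2\pi}\int^{+\infty}_{-\infty}\widehat f(-t)\dif t=f(0)=\Inner{\delta}{f}$, which is precisely \eqref{eq:inverse-of-delta} read as an identity of distributions.

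Alternatively — and this is the route I would actually write out in a pedagogical paper — one realizes the divergent integral as the limit of its symmetric truncations,
\begin{eqnarray*}
\frac1{2\pi}\int^{L}_{-L}e^{\mathrm{i}tx}\dif x = \frac{\sin(Lt)}{\pi t},
\end{eqnarray*}
recognizes the right-hand side as the Dirichlet kernel, and recalls two facts: $\int^{+\infty}_{-\infty}\frac{\sin(Lt)}{\pi t}\dif t=1$ for every $L>0$ (a rescaling of the Dirichlet integral $\int^{+\infty}_{-\infty}\frac{\sin(\omega x)}{x}\dif x=\pi$ quoted above), and $\frac{\sin(Lt)}{\pi t}\to0$ as $L\to\infty$ for each fixed $t\neq0$. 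These are the defining features of a nascent delta family, so $\lim_{L\to\infty}\int^{+\infty}_{-\infty}\frac{\sin(Lt)}{\pi t}f(t)\dif t=f(0)$ for suitable $f$, which is again \eqref{eq:inverse-of-delta}. I expect the main obstacle to be exactly this passage to the limit: because $\frac{\sin(Lt)}{\pi t}$ is not absolutely integrable, it is not an approximate identity in the usual $L^1$-concentration sense, and the limit has to be extracted from the oscillation of $\sin(Lt)$ — splitting the integral into $\abs{t}<\delta$ and $\abs{t}\geq\delta$, handling the outer piece by the Riemann--Lebesgue lemma and the inner piece by the continuity of $f$ at $0$ together with $\int^{L\delta}_{-L\delta}\frac{\sin u}{\pi u}\dif u\to1$. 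Beyond this point the argument is routine, and in the spirit of Section~1 one may simply leave it at the formal differentiation.
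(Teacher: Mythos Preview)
Your primary approach --- formally differentiating the identity $\frac{1}{2\pi}\int_{-\infty}^{+\infty}\frac{e^{\mathrm{i}tx}}{\mathrm{i}x}\dif x=\frac12\sign(t)$ with respect to $t$ --- is exactly what the paper does; indeed the paper offers no proof beyond the single sentence ``A formal differentiation of this with respect to $t$ then yields the following result.'' Your added material (the distributional pairing argument and the Dirichlet-kernel nascent-delta route) goes well beyond what the paper supplies and is correct, so the proposal is fine.
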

This amounts to say $\cF^{-1}(1)(t)=\delta(t)$. Replacing $t$ by
$t-a$, we have
\begin{eqnarray}\label{eq:inverse-of-delta-a}
\frac1{2\pi}\int^{+\infty}_{-\infty}e^{\mathrm{i}(t-a) x}\dif x =
\delta(t-a)=\delta_a(t).
\end{eqnarray}
This amounts to say $\cF^{-1}(e^{-\mathrm{i}ax})(t)=\delta_a(t)$ or
$\cF(\delta_a(t))(x)=e^{-\mathrm{i}ax}$. The integral on the
left-hand side of \eqref{eq:inverse-of-delta} is, of course,
divergent, and it is clear that this equation must be understood
symbolically. That is to say, for all sufficiently well-behaved
functions $f$, we should interpret \eqref{eq:inverse-of-delta} to
mean that
\begin{eqnarray}
\int^{+\infty}_{-\infty}f(t)
\Br{\frac1{2\pi}\int^{+\infty}_{-\infty}e^{\mathrm{i}t \omega}\dif
\omega}\dif t = \int^{+\infty}_{-\infty}f(t)\delta(t)\dif t = f(0)
\end{eqnarray}
or, more generally, that
\begin{eqnarray}
\int^{+\infty}_{-\infty}f(x)
\Br{\frac1{2\pi}\int^{+\infty}_{-\infty}e^{\mathrm{i}(t-x)
\omega}\dif \omega}\dif x =
\int^{+\infty}_{-\infty}f(x)\delta(t-x)\dif x = f(t).
\end{eqnarray}
We can rewrite this result in the form
\begin{eqnarray}
f(t) &=& \int^{+\infty}_{-\infty}f(x)
\Br{\frac1{2\pi}\int^{+\infty}_{-\infty}e^{\mathrm{i}(t-x)
\omega}\dif \omega}\dif x\\
&=&\frac1{2\pi}\int^{+\infty}_{-\infty}e^{\mathrm{i}t\omega}\Br{\int^{+\infty}_{-\infty}
f(x)e^{-\mathrm{i}x\omega}\dif x}\dif \omega =
\frac1{2\pi}\int^{+\infty}_{-\infty}e^{\mathrm{i}t\omega}\widehat
f(\omega)\dif \omega
\end{eqnarray}

\begin{prop}[Fourier Inversion]
Let $f$ be a (real or complex valued) function of a single real
variable which is absolutely integrable over the interval
$(-\infty,+\infty)$ and which also satisfies the Dirichlet
conditions over every finite interval. If $\widehat f(\omega)$
denotes the Fourier transform of $f$, then at each point $t$ we have
\begin{eqnarray}
\frac1{2\pi}\int^{+\infty}_{-\infty}e^{\mathrm{i}t\omega}\widehat
f(\omega)\dif \omega = \frac12\Br{f(t+)+f(t-)},
\end{eqnarray}
where $f(t\pm):=\lim_{s\to t^\pm}f(s)$.
\end{prop}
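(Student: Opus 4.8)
The plan is to reduce the left-hand side to a convolution against the Dirichlet kernel and then use the Dirichlet conditions to localize the integral at the point $t$. First I would render the inner integral finite by writing the left-hand side as $\lim_{R\to\infty}\frac1{2\pi}\int_{-R}^{R}e^{\mathrm{i}t\omega}\widehat f(\omega)\dif\omega$, substitute the definition $\widehat f(\omega)=\int_\real e^{-\mathrm{i}\omega s}f(s)\dif s$, and interchange the two integrations. The interchange is legitimate by Fubini's theorem, since $f$ is absolutely integrable on $\real$ and the $\omega$-range $[-R,R]$ is compact. Carrying out the elementary $\omega$-integral gives
\[
\frac1{2\pi}\int_{-R}^{R}e^{\mathrm{i}(t-s)\omega}\dif\omega=\frac{\sin\pa{R(t-s)}}{\pi(t-s)},
\]
so that the left-hand side equals $\lim_{R\to\infty}\int_\real f(s)\dfrac{\sin\pa{R(t-s)}}{\pi(t-s)}\dif s$, i.e.\ the limit of $f$ convolved with the Dirichlet kernel $D_R(u):=\dfrac{\sin(Ru)}{\pi u}$.

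Next I would substitute $u=s-t$ and split the integral at $u=0$ into a piece over $(0,+\infty)$ and a piece over $(-\infty,0)$. Using the normalization $\int_0^{+\infty}\dfrac{\sin(Ru)}{u}\dif u=\dfrac\pi2$ — which is precisely the standard integral $\int^{+\infty}_{-\infty}\dfrac{\sin(\omega x)}{x}\dif x=\pi$ recalled above, after the rescaling $x\mapsto Ru$ — I would write the contribution of the right half as
\[
\frac12 f(t+)+\int_0^{+\infty}\bigl(f(t+u)-f(t+)\bigr)\frac{\sin(Ru)}{\pi u}\dif u,
\]
and similarly for the left half with $f(t-)$. It then remains to show that each of these error integrals tends to $0$ as $R\to\infty$, since summing the two halves would then produce exactly $\frac12\Br{f(t+)+f(t-)}$.

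The main obstacle — and the only place the Dirichlet conditions are genuinely used — is precisely this vanishing. I would split the range $u\in(0,+\infty)$ into a short interval $(0,\eta)$ and its complement $(\eta,+\infty)$. On $(\eta,+\infty)$ the function $u\mapsto\dfrac{f(t+u)-f(t+)}{u}$ is absolutely integrable (near $+\infty$ because $f\in L^1$ and $1/u$ is bounded, near $\eta$ because the denominator is bounded away from zero), so the Riemann–Lebesgue lemma sends its integral against $\sin(Ru)$ to $0$ as $R\to\infty$. On $(0,\eta)$, the Dirichlet hypothesis says $f$ is piecewise monotone near $t$, so for $\eta$ small the function $g(u):=f(t+u)-f(t+)$ is monotone on $(0,\eta)$ with $g(0+)=0$; applying the second mean value theorem for integrals (Bonnet's form) gives $\int_0^{\eta}g(u)\dfrac{\sin(Ru)}{\pi u}\dif u=\dfrac{g(\eta-)}{\pi}\int_{\xi}^{\eta}\dfrac{\sin(Ru)}{u}\dif u$ for some $\xi\in[0,\eta]$, and since $\sup_{X\ge0}\abs{\int_0^{X}\frac{\sin v}{v}\dif v}<+\infty$ this is bounded by $C\abs{f(t+\eta)-f(t+)}$ uniformly in $R$. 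Letting $R\to\infty$ first and then $\eta\to0$ kills both pieces, because $f(t+)$ exists. The mirror argument on $(-\infty,0)$ yields $\frac12 f(t-)$, and adding the two gives the claim. Finally I would remark that the formal one-line version of this computation is just $\frac1{2\pi}\int e^{\mathrm{i}t\omega}\widehat f(\omega)\dif\omega=\int f(x)\Br{\frac1{2\pi}\int e^{\mathrm{i}(t-x)\omega}\dif\omega}\dif x=\int f(x)\delta(t-x)\dif x$ via the Fourier representation \eqref{eq:inverse-of-delta-a} of $\delta$, the content of the rigorous argument being to explain why this symbolic manipulation returns the symmetric mean $\frac12\Br{f(t+)+f(t-)}$ at a jump discontinuity.
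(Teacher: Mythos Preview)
Your argument is correct in outline and substantially more complete than what the paper offers. The paper does not give a proof of this proposition at all: it simply records the formal computation
\[
f(t)=\int f(x)\Br{\frac1{2\pi}\int e^{\mathrm{i}(t-x)\omega}\dif\omega}\dif x=\int f(x)\delta(t-x)\dif x
\]
immediately before stating the result, and then moves on. This is exactly the ``one-line version'' you describe in your final remark, so you have already identified the paper's entire argument and correctly characterized your Dirichlet-kernel treatment as the rigorous content behind that symbolic manipulation. What your approach buys is an honest explanation of why the symmetric mean $\tfrac12\Br{f(t+)+f(t-)}$ appears at a jump; the paper's heuristic cannot see this.

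One small slip worth tightening: on $(\eta,\infty)$ you assert that $u\mapsto\dfrac{f(t+u)-f(t+)}{u}$ is absolutely integrable, but the constant piece $f(t+)/u$ is not in $L^1(\eta,\infty)$ when $f(t+)\neq0$, so Riemann--Lebesgue does not apply to the difference as written. The repair is immediate and does not alter your structure: handle $\int_\eta^\infty f(t+u)\dfrac{\sin(Ru)}{\pi u}\dif u$ by Riemann--Lebesgue (this integrand \emph{is} in $L^1$, since $\abs{f(t+u)/u}\le\abs{f(t+u)}/\eta$), and treat $f(t+)\int_\eta^\infty\dfrac{\sin(Ru)}{\pi u}\dif u=\dfrac{f(t+)}{\pi}\int_{R\eta}^\infty\dfrac{\sin v}{v}\dif v$ directly as the tail of the convergent improper integral $\int_0^\infty\frac{\sin v}{v}\dif v$, which tends to $0$ as $R\to\infty$. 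With that adjustment the proof is complete.
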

There are several important properties of the Fourier transform
which merit explicit mention.
\begin{enumerate}[(i)]
\item The Fourier transform of the convolution of two functions is equal to the product of their individual transforms:
$\cF(f*g)=\cF(f)\cF(g)$.
\item $\cF(fg) =
\frac1{2\pi}\cF(f)*\cF(g)$.
\item The Fourier transformation is
linear: $\cF(\lambda_1f_1+\lambda_2f_2) = \lambda_1\cF(f_1) +
\lambda_2\cF(f_2)$.
\item $\cF(f(x-a))(\omega) = e^{-\mathrm{i}\omega
a}\cF(f(x))(\omega)$.
\item $\cF(f(x)e^{-ax})(\omega) = \cF(f(x))(a+\mathrm{i}\omega)$.
\item $\cF(f')(\omega) = \mathrm{i}\omega\cF(f)(\omega)$.
\item $\cF(f(ax))(\omega)=\frac1a\cF(f(x))\Pa{\frac{\omega}a}$.
\end{enumerate}
For example, the proofs of (iv) and (vi) are given. Indeed,
$f(x-a)=f_a(x)=f*\delta_a(x)$, thus $\cF(f(x-a))(\omega) =
\cF(f*\delta_a(x))(\omega) = \cF(f)\cF(\delta_a)(\omega)$, that is,
$\cF(f(x-a))(\omega)=e^{-\mathrm{i}\omega a}\cF(f)$, hence (iv).
Since $f'=f*\delta'$, it follows that
$\cF(f')=\cF(f*\delta')=\cF(f)\cF(\delta')$. In what follows, we
calculate $\cF(\delta')$. By definition of Fourier transform,
\begin{eqnarray}
\cF(\delta')(\omega) = \int_\real e^{-\mathrm{i}\omega
t}\delta'(t)\dif t = -\frac{\dif e^{-\mathrm{i}\omega t}}{\dif
t}\big|_{t=0} =\mathrm{i}\omega.
\end{eqnarray}
Thus $\cF(f')(\omega) = \mathrm{i}\omega\cF(f)(\omega)$, hence (vi).
This property can be generalized:
$\cF(f^{(n)})(\omega)=(\mathrm{i}\omega)^n\cF(f)(\omega)$. Indeed,
\begin{eqnarray}
\cF(f^{(n)})(\omega)=\cF(f*\delta^{(n)})(\omega)=\cF(f)(\omega)\cF(\delta^{(n)})(\omega)=(\mathrm{i}\omega)^n\cF(f)(\omega).
\end{eqnarray}

We can apply the sampling property of the delta function to the
Fourier inversion integral:
\begin{eqnarray}
\frac1{2\pi}\int^{+\infty}_{-\infty}e^{\mathrm{i}x\omega}\delta(\omega-\alpha)\dif
\omega=\frac1{2\pi}e^{\mathrm{i}x\alpha}
\end{eqnarray}
and similarly
\begin{eqnarray}
\frac1{2\pi}\int^{+\infty}_{-\infty}e^{\mathrm{i}x\omega}\delta(\omega+\alpha)\dif
\omega=\frac1{2\pi}e^{-\mathrm{i}x\alpha}.
\end{eqnarray}
Thus, recalling that the Fourier transform is defined in general for
complex-valued functions, these results suggest that we can give the
following definitions for the Fourier transforms of complex
exponentials such as
$$
\cF(e^{\mathrm{i}\alpha x})(\omega) =
2\pi\delta(\omega-\alpha);\quad \cF(e^{-\mathrm{i}\alpha x})(\omega)
= 2\pi\delta(\omega+\alpha).
$$
Both equations immediately yield the following definitions for the
Fourier transforms of the real functions $\cos(\alpha x)$ and
$\sin(\alpha x)$:
\begin{eqnarray}
\cF(\cos(\alpha x))(\omega) = \pi\Pa{\delta(\omega-\alpha) +
\delta(\omega+\alpha)},\\
\cF(\sin(\alpha x))(\omega) =
-\mathrm{i}\pi\Pa{\delta(\omega-\alpha) - \delta(\omega+\alpha)}.
\end{eqnarray}
In particular, taking $\alpha = 0$, we find that the generalized
Fourier transform of the constant function $f(t) \equiv 1$ is simply
$2\pi\delta(\omega)$. This in turn allows us to offer a definition
of the Fourier transform of the unit step function.
\begin{prop}
The Fourier transform of the Heaviside step function
$H(x)=\frac12+\frac12\sign(x)$, where $\sign(x)=\frac{x}{\abs{x}}$,
is given by
\begin{eqnarray}
\widehat H(\omega) = \pi\delta(\omega) + \frac1{\mathrm{i}\omega}.
\end{eqnarray}
\end{prop}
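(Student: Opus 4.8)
The plan is to exploit linearity of the Fourier transform together with the decomposition $H(x)=\tfrac12+\tfrac12\sign(x)$, so that $\widehat H(\omega)=\tfrac12\cF(1)(\omega)+\tfrac12\cF(\sign)(\omega)$. The first term is immediate from the computation already recorded above that the generalized Fourier transform of the constant function $1$ is $2\pi\delta(\omega)$, contributing $\pi\delta(\omega)$. Everything therefore reduces to showing $\cF(\sign)(\omega)=\tfrac{2}{\mathrm{i}\omega}$.

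For that I would read off the principal-value identity $\frac1{2\pi}\int^{+\infty}_{-\infty}\frac{e^{\mathrm{i}tx}}{\mathrm{i}x}\dif x=\tfrac12\sign(t)$ established earlier: it says precisely that $\cF^{-1}\Pa{\tfrac1{\mathrm{i}\omega}}(t)=\tfrac12\sign(t)$, and applying $\cF$ to both sides gives $\cF\Pa{\tfrac12\sign}(\omega)=\tfrac1{\mathrm{i}\omega}$, i.e. $\cF(\sign)(\omega)=\tfrac2{\mathrm{i}\omega}$. Substituting back yields $\widehat H(\omega)=\pi\delta(\omega)+\tfrac1{\mathrm{i}\omega}$.

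An alternative derivation — the one I would actually include, since it makes the coefficient of $\delta(\omega)$ honest — uses $H'=\delta$ together with property (vi), $\cF(H')(\omega)=\mathrm{i}\omega\,\widehat H(\omega)$, and $\cF(\delta)(\omega)=1$, to get $\mathrm{i}\omega\,\widehat H(\omega)=1$. This pins down $\widehat H$ only up to an additive multiple of $\delta(\omega)$, because $\omega\,\delta(\omega)=0$; write $\widehat H(\omega)=c\,\delta(\omega)+\tfrac1{\mathrm{i}\omega}$. To fix $c$, apply $\cF$ to the identity $H(x)+H(-x)=1$ (valid for $x\neq0$): by the change of variables $t\mapsto -t$ one has $\cF(H(-x))(\omega)=\widehat H(-\omega)$, while $\cF(1)(\omega)=2\pi\delta(\omega)$; since $\delta$ is even and $\tfrac1{\mathrm{i}\omega}$ is odd, the left side becomes $2c\,\delta(\omega)$, whence $c=\pi$.

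The main obstacle is not the algebra but the bookkeeping of distributional subtleties: the integral defining $\cF(\sign)$ converges only as a Cauchy principal value, and "dividing by $\mathrm{i}\omega$" is genuinely non-unique, so the $\delta$ term must be determined by an independent argument such as the parity relation $H(x)+H(-x)\equiv 1$. Once one accepts, as the preceding sections do, the symbolic reading of these Fourier pairs, the verification is routine.
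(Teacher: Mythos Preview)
Your first approach---decompose $H=\tfrac12+\tfrac12\sign$, invoke $\cF(1)=2\pi\delta$ and read off $\cF(\sign)(\omega)=\tfrac{2}{\mathrm{i}\omega}$ from the principal-value identity, then combine by linearity---is exactly what the paper does, in the same order and with the same ingredients.

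Your alternative route via $H'=\delta$, property (vi), and the parity relation $H(x)+H(-x)=1$ is correct and genuinely different. The paper's argument is shorter because it simply \emph{accepts} the inverse-transform identity as determining $\cF(\sign)$; your alternative instead treats $\mathrm{i}\omega\,\widehat H(\omega)=1$ as an equation in distributions, acknowledges the $\delta(\omega)$-kernel of multiplication by $\omega$, and fixes the free constant by an independent symmetry. What your version buys is exactly what you say: it makes explicit that the $\pi\delta(\omega)$ term is not an artifact of formal manipulation but is forced by the even part of $H$. The cost is a few more lines and the need to justify $\cF(H(-x))(\omega)=\widehat H(-\omega)$, which is immediate from the substitution $t\mapsto -t$.
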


\begin{proof}
Now
\begin{eqnarray}
\frac1{2\pi}\int^{+\infty}_{-\infty}\frac{e^{\mathrm{i}x
\omega}}{\mathrm{i}\omega}\dif \omega =\frac12\sign(x)=\begin{cases}
\frac12,&x>0\\-\frac12,&x<0\end{cases}
\end{eqnarray}
Then we know that $\frac2{\mathrm{i}\omega}$ is a suitable choice
for the Fourier transform of the function $\sign(x)$ in the sense
that
\begin{eqnarray}
\sign(x) = \frac1{2\pi}\int^{+\infty}_{-\infty}e^{\mathrm{i}x
\omega}\frac2{\mathrm{i}\omega}\dif \omega =
\cF^{-1}\Pa{\frac2{\mathrm{i}\omega}}(x).
\end{eqnarray}
This amounts to say that $\cF(\sign(x))(\omega) =
\frac2{\mathrm{i}\omega}$. Hence for Heaviside step function
\begin{eqnarray}
H(x) = \frac12+\frac12\sign(x),
\end{eqnarray}
the Fourier transform of it is given by
\begin{eqnarray}
\cF(H(x))(\omega) = \cF\Pa{\frac12+\frac12\sign(x)} =
\frac12\cF(1)(\omega) + \frac12\cF(\sign(x))(\omega),
\end{eqnarray}
i.e.
\begin{eqnarray}
\cF(H(x))(\omega) = \pi\delta(\omega) +
\frac1{\mathrm{i}\omega}\Longleftrightarrow \widehat H(\omega) =
\pi\delta(\omega) + \frac1{\mathrm{i}\omega}.
\end{eqnarray}
This completes the proof.
\end{proof}
We can get some important properties of Dirac delta function which
are listed below:
\begin{enumerate}[(i)]
\item The delta function is an even distribution: $\delta(x) = \delta(-x)$.
\item The delta function satisfies the following scaling property for a non-zero scalar: $\delta(ax) = \frac1{\abs{a}}\delta(x)$ for $a\in\real\backslash\set{0}$.
\item The distributional product of $\delta(x)$ and $x$ is equal to
zero: $x\delta(x)=0$.
\item If $xf(x)=xg(x)$, where $f$ and $g$ are distributions, then
$f(x)=g(x)+c\delta(x)$ for some constant $c$.
\end{enumerate}
Previous two facts can be checked as follows: Note that
\begin{eqnarray*}
\delta(-x) &=&
\frac1{2\pi}\int^\infty_{-\infty}e^{-\mathrm{i}tx}\dif t
=-\frac1{2\pi}\int^\infty_{-\infty}e^{-\mathrm{i}tx}\dif (-t) \\
&=& -\frac1{2\pi}\int^{-\infty}_{+\infty}e^{\mathrm{i}sx}\dif s =
\frac1{2\pi}\int^\infty_{-\infty}e^{\mathrm{i}sx}\dif s = \delta(x).
\end{eqnarray*}
This is (i). For the proof of (ii), since $a\neq0$, we observe that
$\delta(ax)=\delta(-ax)$ by (i), hence
$\delta(ax)=\delta(\abs{a}x)$, it follows that
\begin{eqnarray*}
\delta(ax) = \delta(\abs{a}x) =
\frac1{2\pi}\int^{+\infty}_{-\infty}e^{\mathrm{i}t\cdot
\abs{a}x}\dif t= \frac1{2\pi}\int^{+\infty}_{-\infty}e^{\mathrm{i}
\abs{a}t\cdot x}\dif t.
\end{eqnarray*}
Let $s=\abs{a}t$. Then $\dif s=\abs{a}\dif t$, thus
\begin{eqnarray}
\delta(ax) =
\frac1{\abs{a}}\frac1{2\pi}\int^{+\infty}_{-\infty}e^{\mathrm{i}
s\cdot x}\dif s = \frac1{\abs{a}}\delta(x).
\end{eqnarray}
That is, in the sense of distribution,
\begin{eqnarray}
\delta(ax) = \frac1{\abs{a}}\delta(x),\quad
a\in\real\backslash\set{0}.
\end{eqnarray}

%=============================================================================%
\section{Dirac delta function of vector argument}
%=============================================================================%

\begin{definition}[Dirac delta function of real-vector arguments]
The real-vector delta function can be defined in $n$-dimensional
Euclidean space $\real^n$ as the measure such that
\begin{eqnarray}
\int_{\real^n} f(\bsx)\delta(\bsx)[\dif \bsx] = f(\mathbf{0})
\end{eqnarray}
for every compactly supported continuous function $f$. As a measure,
the $n$-dimensional delta function is the product measure of the
1-dimensional delta functions in each variable separately. Thus,
formally, with
\begin{eqnarray}
\delta(\bsx)=\prod^n_{j=1}\delta(x_j),
\end{eqnarray}
where $\bsx=[x_1,\ldots,x_n]^\t\in\real^n$.
\end{definition}
The delta function in an $n$-dimensional space satisfies the
following scaling property instead:
\begin{eqnarray}\label{eq:scalar-product}
\delta(a\bsx)=\abs{a}^{-n}\delta(\bsx),\quad
a\in\real\backslash\set{0}.
\end{eqnarray}
Indeed, $a\bsx = \Br{ax_1,\ldots, ax_n}^\t$ for
$\bsx=\Br{x_1,\ldots,x_n}^\t$, thus
$$
\delta(a\bsx) = \prod^n_{j=1}\delta(ax_j) =
\prod^n_{j=1}\abs{a}^{-1}\delta(x_j) =
\abs{a}^{-n}\prod^n_{j=1}\delta(x_j)=\abs{a}^{-n}\delta(\bsx).
$$
This indicates that $\delta$ is a homogeneous distribtion of degree
$(-n)$. As in the one-variable case, it is possible to define the
compositon of $\delta$ with a bi-Lipschitz function
$g:\real^n\to\real^n$ uniquely so that the identity
\begin{eqnarray}
\int_{\real^n}f(g(\bsx))\delta(g(\bsx))\abs{\det g'(\bsx)}[\dif
\bsx] = \int_{g(\real^n)}f(\bsu)\delta(\bsu)[\dif \bsu]
\end{eqnarray}
for all compactly supported functions $f$.

Using the coarea formula from geometric measure theory, one can also
define the composition of the delta function with a submersion from
one Euclidean space to another one of different dimension; the
result is a type of current. In the special case of a continuously
differentiable function $g: \real^n\to\real$ such that the gradient
of $g$ is nowhere zero, the following identity holds\footnote{See
\url{https://en.wikipedia.org/wiki/Dirac_delta_function}}
\begin{eqnarray}
\int_{\real^n}f(\bsx)\delta(g(\bsx))[\dif\bsx] =
\int_{g^{-1}(0)}\frac{f(\bsx)}{\abs{\nabla g(\bsx)}}\dif\sigma(\bsx)
\end{eqnarray}
where the integral on the right is over $g^{-1}(0)$, the
$(n-1)$-dimensional surface defined by $g(\bsx)=0$ with respect to
the Minkowski content  measure. That is known as a simple layer
integral.

\begin{prop}
It holds that
\begin{eqnarray}
\delta(\bsx) = \frac1{(2\pi)^n} \int_{\real^n}
e^{\mathrm{i}\Inner{\bst}{\bsx}}[\dif \bst]\quad (\bsx\in\real^n).
\end{eqnarray}
\end{prop}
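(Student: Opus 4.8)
The plan is to reduce the $n$-dimensional statement to the one-variable Fourier representation $\delta(t)=\frac1{2\pi}\int^{+\infty}_{-\infty}e^{\mathrm{i}tx}\dif x$ of \eqref{eq:inverse-of-delta}, exploiting the product structure of the vector delta function. Writing $\mathbf{x}=[x_1,\ldots,x_n]^\t$ and $\mathbf{t}=[t_1,\ldots,t_n]^\t$, one has $\Inner{\mathbf{t}}{\mathbf{x}}=\sum_{j=1}^n t_jx_j$, so the integrand factorizes, $e^{\mathrm{i}\Inner{\mathbf{t}}{\mathbf{x}}}=\prod_{j=1}^n e^{\mathrm{i}t_jx_j}$. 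Since the domain of integration is the full product $\real^n=\real\times\cdots\times\real$ and $\dif\mathbf{t}=\dif t_1\cdots\dif t_n$, Fubini's theorem (applied, as in the one-dimensional case, after pairing with a well-behaved test function) splits the iterated integral into a product of $n$ copies of the scalar integral:
\[
\frac1{(2\pi)^n}\int_{\real^n}e^{\mathrm{i}\Inner{\mathbf{t}}{\mathbf{x}}}\dif\mathbf{t} = \prod_{j=1}^n\Pa{\frac1{2\pi}\int^{+\infty}_{-\infty}e^{\mathrm{i}t_jx_j}\dif t_j} = \prod_{j=1}^n\delta(x_j) = \delta(\mathbf{x}),
\]
where the final equality is the definition of the vector delta function.

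To make this rigorous rather than merely formal, I would verify the defining identity against an arbitrary compactly supported continuous $f$ on $\real^n$, i.e. show
\[
\int_{\real^n} f(\mathbf{x})\Br{\frac1{(2\pi)^n}\int_{\real^n}e^{\mathrm{i}\Inner{\mathbf{t}}{\mathbf{x}}}\dif\mathbf{t}}\dif\mathbf{x} = f(\mathbf{0}),
\]
which is precisely the $n$-dimensional Fourier inversion formula evaluated at the origin; this follows by iterating the one-variable Fourier inversion proposition above, integrating out $x_1,\ldots,x_n$ one coordinate at a time and invoking the sampling property at each stage. An alternative route, which sidesteps Fubini, is to insert a Gaussian convergence factor $e^{-\varepsilon\abs{\mathbf{t}}^2}$, evaluate the resulting absolutely convergent Gaussian integral to obtain the nascent delta $(2\pi)^{-n}(\pi/\varepsilon)^{n/2}e^{-\abs{\mathbf{x}}^2/(4\varepsilon)}$, and let $\varepsilon\to0^+$; this requires only the standard multivariate Gaussian integral and an approximate-identity argument.

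The only genuine subtlety — exactly the one already flagged after \eqref{eq:inverse-of-delta} in the scalar case — is that $\int_{\real^n}e^{\mathrm{i}\Inner{\mathbf{t}}{\mathbf{x}}}\dif\mathbf{t}$ does not converge in the ordinary sense, so the asserted identity must be read symbolically, under the test-function pairing. Once that interpretation is fixed, the factorization above together with the $n$-fold application of the established one-dimensional representation is the entire argument, and no new obstacle arises in passing from $\real$ to $\real^n$.
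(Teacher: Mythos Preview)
Your argument is correct and is essentially the same as the paper's: both exploit the factorization $e^{\mathrm{i}\Inner{\mathbf{t}}{\mathbf{x}}}=\prod_j e^{\mathrm{i}t_jx_j}$ and the product definition $\delta(\mathbf{x})=\prod_j\delta(x_j)$ to reduce to $n$ copies of the one-dimensional identity \eqref{eq:inverse-of-delta}. The paper simply runs the computation from $\delta(\mathbf{x})$ toward the integral, whereas you run it in the opposite direction and add (welcome but optional) remarks on the test-function interpretation and Gaussian regularization.
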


\begin{proof}
Let $\bsx=[x_1,\ldots,x_n]^\t\in\real^n$. Then by Fourier transform
of Dirac delta function:
\begin{eqnarray}
\delta(\bsx) &=& \prod^n_{j=1}\delta(x_j) =
\prod^n_{j=1}\frac1{2\pi}\int_\real e^{\mathrm{i}t_jx_j}\dif t_j\\
&=& \frac1{(2\pi)^n}
\int_{\real^n}e^{\mathrm{i}\sum^n_{j=1}t_jx_j}\prod^n_{j=1}\dif
t_j\\
&=&\frac1{(2\pi)^n} \int_{\real^n}
e^{\mathrm{i}\Inner{\bst}{\bsx}}[\dif \bst]\quad (\bsx\in\real^n),
\end{eqnarray}
where $[\dif\bst]:=\prod^n_{j=1}\dif t_j$.
\end{proof}

\begin{prop}\label{prop:transformation-delta}
For a full-ranked real matrix $A\in\real^{n\times n}$, it holds that
\begin{eqnarray}
\delta(A\bsx) = \frac1{\abs{\det(A)}}\delta(\bsx),\quad
\bsx\in\real^n.
\end{eqnarray}
In particular, under any reflection or rotation $R$, the delta
function is invariant:
\begin{eqnarray}
\delta(R\bsx) = \delta(\bsx).
\end{eqnarray}
\end{prop}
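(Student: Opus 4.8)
The plan is to establish the identity in the sense of distributions, i.e.\ by pairing both sides with an arbitrary compactly supported continuous test function $f$ on $\real^n$ and checking that the resulting numbers agree; this is precisely the framework fixed by the definitions above. So I would fix such an $f$ and study $\int_{\real^n} f(\mathbf{x})\delta(A\mathbf{x})\dif\mathbf{x}$.

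\textbf{First approach (linear change of variables).} Since $A$ is full-ranked it is invertible, and the substitution $\mathbf{u}=A\mathbf{x}$ is a bijection of $\real^n$ with $\dif\mathbf{u}=\abs{\det(A)}\dif\mathbf{x}$, equivalently $\dif\mathbf{x}=\abs{\det(A)}^{-1}\dif\mathbf{u}$. Carrying this out inside the integral and then invoking the sampling (defining) property of the $n$-dimensional delta function gives $\int f(\mathbf{x})\delta(A\mathbf{x})\dif\mathbf{x}=\abs{\det(A)}^{-1}\int f(A^{-1}\mathbf{u})\delta(\mathbf{u})\dif\mathbf{u}=\abs{\det(A)}^{-1}f(A^{-1}\mathbf{0})=\abs{\det(A)}^{-1}f(\mathbf{0})$. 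On the other hand $\int f(\mathbf{x})\,\abs{\det(A)}^{-1}\delta(\mathbf{x})\dif\mathbf{x}=\abs{\det(A)}^{-1}f(\mathbf{0})$ by the same property. Since this holds for every $f$, the stated identity follows. It is also the special case $g(\mathbf{x})=A\mathbf{x}$, $g'(\mathbf{x})\equiv A$, $g^{-1}(\mathbf{0})=\set{\mathbf{0}}$ of the bi-Lipschitz composition formula recorded earlier.

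\textbf{Second approach (Fourier representation),} mirroring the one-dimensional computation of $\delta(ax)$ above. Starting from $\delta(A\mathbf{x})=(2\pi)^{-n}\int_{\real^n}e^{\mathrm{i}\Inner{\mathbf{t}}{A\mathbf{x}}}\dif\mathbf{t}$ and using $\Inner{\mathbf{t}}{A\mathbf{x}}=\Inner{A^\t\mathbf{t}}{\mathbf{x}}$, substitute $\mathbf{s}=A^\t\mathbf{t}$; since $\abs{\det(A^\t)}=\abs{\det(A)}$ this yields $\delta(A\mathbf{x})=\abs{\det(A)}^{-1}(2\pi)^{-n}\int_{\real^n}e^{\mathrm{i}\Inner{\mathbf{s}}{\mathbf{x}}}\dif\mathbf{s}=\abs{\det(A)}^{-1}\delta(\mathbf{x})$ by the Fourier integral representation of the delta function. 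The particular case is then immediate: a reflection or rotation $R$ is orthogonal, so $\abs{\det(R)}=1$ and $\delta(R\mathbf{x})=\delta(\mathbf{x})$.

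The only genuine subtlety — the "hard part" — is legitimacy rather than computation: one must be careful that the change of variables is performed \emph{inside} the pairing with a test function (as in the first approach), since $\delta$ is not an ordinary function; equivalently, in the second approach the exchange of the $\mathbf{t}$- (or $\mathbf{s}$-) integration with the implicit test-function integration, and the linear substitution in the divergent oscillatory integral, must be read in the same symbolic/distributional sense already adopted for \eqref{eq:inverse-of-delta} and its $n$-dimensional analogue. Granting that interpretation, the remainder is the routine bookkeeping above.
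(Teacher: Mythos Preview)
Your proposal is correct, and your second (Fourier) approach coincides exactly with the paper's first proof: write $\delta(A\mathbf{x})$ via the Fourier integral representation, move $A$ across the inner product to $A^\t\mathbf{t}$, and substitute $\mathbf{s}=A^\t\mathbf{t}$ with Jacobian $\abs{\det(A)}$.

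Where you and the paper diverge is in the alternative argument. Your first approach pairs $\delta(A\mathbf{x})$ with a test function and performs the linear change of variables $\mathbf{u}=A\mathbf{x}$ directly inside the pairing; this is the cleanest distributional justification and makes the ``legitimacy'' issue you flag completely transparent. The paper does \emph{not} give this argument. Instead its second proof goes through the singular value decomposition $A=L\Lambda R^\t$: reduce to a diagonal $\Lambda$ acting coordinatewise (where the product structure $\delta(\Lambda\mathbf{y})=\prod_j\delta(\lambda_j y_j)=\prod_j\lambda_j^{-1}\delta(y_j)$ applies), and absorb the orthogonal factors $L,R$. Your test-function route is more self-contained; the paper's SVD route has the advantage of reducing everything to the one-variable scaling law $\delta(\lambda y)=\abs{\lambda}^{-1}\delta(y)$ already established, though as written it tacitly uses the orthogonal invariance $\delta(L\mathbf{w})=\delta(\mathbf{w})$ in the course of proving it.
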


\begin{proof}[The first proof]
By using Fourier transform of Dirac delta function, it follows that
\begin{eqnarray}
\delta(A\bsx) = \frac1{(2\pi)^n}\int_{\real^n}
e^{\mathrm{i}\Inner{\bst}{A\bsx}}[\dif \bst]=
\frac1{(2\pi)^n}\int_{\real^n}
e^{\mathrm{i}\Inner{A^\t\bst}{\bsx}}[\dif \bst]\quad
(\bsx\in\real^n).
\end{eqnarray}
Let $\bss=A^\t\bst$. Then
$[\dif\bss]=\abs{\det(A^\t)}[\dif\bst]=\abs{\det(A)}[\dif\bst]$.
From this, we see that
\begin{eqnarray}
\delta(A\mathbf{x}) =
\abs{\det^{-1}(A)}\frac1{(2\pi)^n}\int_{\real^n}
e^{\mathrm{i}\Inner{\bss}{\bsx}}[\dif \bss] =
\abs{\det^{-1}(A)}\delta(\bsx) \quad (\bsx\in\real^n).
\end{eqnarray}
Thus $\delta(R\bsx) = \delta(\bsx)$ since $\det(R)=\pm1$ for
reflection or any rotation $R$. We are done.
\end{proof}

\begin{proof}[The second proof]
By SVD, we have two orthogonal matrices $L,R$ and diagonal matrix
$$
\Lambda=\diag(\lambda_1,\ldots,\lambda_n)
$$
with positive diagonal
entries such that $A=L\Lambda R^\t$. Then, via $\bsy:= R^\t\bsx$
(hence $\delta(\bsy) =\delta(\bsx)$),
$$
\delta(A\bsx) = \delta(L\Lambda R^\t\bsx) = \delta(\Lambda
\bsy)=\prod^n_{j=1}\delta(\lambda_jy_j)=\prod^n_{j=1}\lambda^{-1}_j\delta(y_j),
$$
that is,
$$
\delta(A\bsx) =
\frac1{\prod^n_{j=1}\lambda_j}\prod^n_{j=1}\delta(y_j)=
\frac1{\prod^n_{j=1}\lambda_j}\delta(\bsy) =
\frac1{\prod^n_{j=1}\lambda_j}\delta(\bsx).
$$
Now $\det(A)=\det(L\Lambda
R^\t\bsx)=\det(L)\det(\Lambda)\det(R^\t)$, it follows that
$$
\abs{\det(A)} =
\abs{\det(L)}\abs{\det(\Lambda)}\abs{\det(R^\t)}=\prod^n_{j=1}\lambda_j.
$$
Therefore we get the desired identity:
$\delta(A\bsx)=\abs{\det^{-1}(A)}\delta(\bsx)$. This completes the
proof.
\end{proof}
Clearly, letting $A=a\I_n$ in the above gives
Eq.~\eqref{eq:scalar-product}.

%=============================================================%
\section{Dirac delta function of matrix argument}
%=============================================================%

\begin{definition}[Dirac delta function of real-matrix argument]
(i) For an $m\times n$ real matrix $X=[x_{ij}]\in\real^{m\times n}$, the
matrix delta function $\delta(X)$ is defined as
\begin{eqnarray}
\delta(X):=\prod^m_{i=1}\prod^n_{j=1}\delta(x_{ij}).
\end{eqnarray}
In particular, the vector delta function is just a special case
where
$n=1$ in the matrix case.\\
(ii) For an $m\times m$ symmetric real matrix $X=[x_{ij}]$, the
matrix delta function $\delta(X)$ is defined as
\begin{eqnarray}
\delta(X):=\prod_{i\leqslant j}\delta(x_{ij}).
\end{eqnarray}
\end{definition}
From the above definition, we see that the matrix delta function of
a complex matrix is equal to the product of one-dimensional delta
functions over the independent real and imaginary parts of this
complex matrix. In view of this observation, we see that $\delta(X)
= \delta(\col{X})$, where $\col{X}$ is the vectorization of the
matrix $X$. It is easily checked for a rectangular matrix. For the
symmetric case, for example, take $2\times 2$ symmetric real matrix
$X=\Br{\begin{array}{cc}
         x_{11} & x_{12} \\
         x_{21} & x_{22}
       \end{array}
}$ with $x_{12}=x_{21}$, then $\col{X} = [x_{11},x_{21},x_{12},
x_{22}]^\t=[x_{11},x_{12},x_{12}, x_{22}]^\t$, thus
$\col{X}=x_{11}[1,0,0,0]^\t+x_{12}[0,1,1,0]^\t+x_{22}[0,0,0,1]^\t$,
i.e., there are three independent variables
$\set{x_{11},x_{12},x_{22}}$ in the vector $\col{X}$ just like in
the matrix $X$, thus
$$
\delta(X) =
\delta(x_{11})\delta(x_{12})\delta(x_{22})=\delta(\col{X}).
$$
\begin{prop}
For an $m\times m$ symmetric matrix $X$, we have
\begin{eqnarray}\label{eq:real-matrix-delta-function}
\delta(X) = 2^{-m}\pi^{-\frac{m(m+1)}2}\int
e^{\mathrm{i}\Tr{TX}}[\dif T],
\end{eqnarray}
where $T=[t_{ij}]$ is also an $m\times m$ real symmetric matrix, and
$[\dif T]:=\prod_{i\leqslant j}\dif t_{ij}$.
\end{prop}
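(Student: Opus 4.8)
The plan is to reduce the symmetric-matrix case to the scalar Fourier representation $\delta(x)=\frac1{2\pi}\int_\real e^{\mathrm{i}tx}\dif x$ from Proposition~\ref{eq:inverse-of-delta}, exactly as was done for vectors. First I would use the identity $\delta(X)=\prod_{i\leqslant j}\delta(x_{ij})$ from the definition, and substitute the Fourier integral for each one-dimensional factor. This gives
\begin{eqnarray*}
\delta(X) = \prod_{i\leqslant j}\frac1{2\pi}\int_\real e^{\mathrm{i}t_{ij}x_{ij}}\dif t_{ij} = \frac1{(2\pi)^N}\int e^{\mathrm{i}\sum_{i\leqslant j}t_{ij}x_{ij}}[\dif T],
\end{eqnarray*}
where $N$ is the number of independent entries of an $m\times m$ symmetric matrix.

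Next I would compute $N$ and the normalization constant. Since a symmetric $m\times m$ matrix has $m$ diagonal entries and $\binom{m}{2}=\frac{m(m-1)}2$ strictly-upper entries, $N = m + \frac{m(m-1)}2 = \frac{m(m+1)}2$. Hence $\frac1{(2\pi)^N} = (2\pi)^{-m(m+1)/2}$, and I would then rearrange this as $2^{-m(m+1)/2}\pi^{-m(m+1)/2}$; to match the stated form $2^{-m}\pi^{-m(m+1)/2}$ one must account for the factor of $2$ hidden in $\Tr{TX}$ versus $\sum_{i\leqslant j}t_{ij}x_{ij}$ — see the next step.

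The key algebraic step is the identity $\Tr{TX} = \sum_{i,j}t_{ij}x_{ij} = \sum_{i}t_{ii}x_{ii} + 2\sum_{i<j}t_{ij}x_{ij}$, valid because $T$ and $X$ are both symmetric, so $\Tr{TX} = \sum_{i,j}t_{ij}x_{ji} = \sum_{i,j}t_{ij}x_{ij}$ and the off-diagonal terms pair up. Therefore $\sum_{i\leqslant j}t_{ij}x_{ij} = \sum_i t_{ii}x_{ii} + \sum_{i<j}t_{ij}x_{ij} = \frac12\bigl(\Tr{TX} + \sum_i t_{ii}x_{ii}\bigr)$, which is \emph{not} simply a multiple of $\Tr{TX}$. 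The clean way around this is to change variables in the integral: replace the off-diagonal integration variables $t_{ij}$ (for $i<j$) by $\tau_{ij} := 2t_{ij}$, keeping $\tau_{ii}:=t_{ii}$, so that $\sum_{i\leqslant j}t_{ij}x_{ij}$ becomes $\sum_i \tau_{ii}x_{ii} + \frac12\sum_{i<j}\tau_{ij}x_{ij}$ — hmm, that still is not $\Tr{TX}$ either. The genuinely correct bookkeeping is the reverse: in $\frac1{(2\pi)^N}\int e^{\mathrm{i}\sum_{i\leqslant j}t_{ij}x_{ij}}[\dif T]$, substitute $s_{ii}=t_{ii}$ and $s_{ij}=t_{ij}$ for $i\le j$ but rewrite the exponent using a symmetric matrix $S$ with entries $s_{ij}=s_{ji}$ where $s_{ij} = \frac12 t_{ij}$ off-diagonal; then $\Tr{SX} = \sum_i s_{ii}x_{ii} + 2\sum_{i<j}s_{ij}x_{ij} = \sum_i t_{ii}x_{ii} + \sum_{i<j}t_{ij}x_{ij} = \sum_{i\le j}t_{ij}x_{ij}$, as desired. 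The Jacobian of $t_{ij}\mapsto s_{ij}$ is $\frac12$ on each of the $\frac{m(m-1)}2$ off-diagonal coordinates and $1$ on the $m$ diagonal ones, so $[\dif T] = 2^{m(m-1)/2}[\dif S]$. Substituting, $\delta(X) = (2\pi)^{-m(m+1)/2}\,2^{m(m-1)/2}\int e^{\mathrm{i}\Tr{SX}}[\dif S]$, and since $m(m+1)/2 - m(m-1)/2 = m$, the power of $2$ collapses to $2^{-m}$, giving exactly $\delta(X) = 2^{-m}\pi^{-m(m+1)/2}\int e^{\mathrm{i}\Tr{SX}}[\dif S]$; renaming $S$ back to $T$ finishes the proof.

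The main obstacle is precisely this factor-of-two bookkeeping between the measure $[\dif T]=\prod_{i\leqslant j}\dif t_{ij}$ and the trace pairing $\Tr{TX}$, which double-counts off-diagonal entries; getting the Jacobian of the rescaling right (and confirming $m(m+1)/2-m(m-1)/2=m$) is the only non-routine point. Everything else is the same product-of-one-dimensional-deltas argument already used in the vector case.
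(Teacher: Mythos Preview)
Your proof is correct and follows essentially the same route as the paper: both reduce to the one-dimensional Fourier representation of $\delta$ and then reconcile the factor of $2$ between $\Tr{TX}=\sum_i t_{ii}x_{ii}+2\sum_{i<j}t_{ij}x_{ij}$ and the ``upper-triangular'' sum $\sum_{i\le j}t_{ij}x_{ij}$. The only cosmetic difference is direction: the paper starts from $\int e^{\mathrm{i}\Tr{TX}}[\dif T]$ and evaluates it using the scaling rule $\delta(2x_{ij})=\tfrac12\delta(x_{ij})$, whereas you start from $\delta(X)=\prod_{i\le j}\delta(x_{ij})$ and absorb the factor of $2$ via the Jacobian of the off-diagonal rescaling $s_{ij}=\tfrac12 t_{ij}$ --- these are the same bookkeeping viewed from opposite ends.
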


\begin{proof}
Since
\begin{eqnarray*}
\Tr{TX} &=& \sum_{j=1}^m t_{jj}x_{jj} + \sum_{i\neq j}t_{ij}x_{ij}=
\sum_{j=1}^m t_{jj}x_{jj} + 2\sum_{i<j}t_{ij}x_{ij}
\end{eqnarray*}
implying that
\begin{eqnarray*}
\int e^{\mathrm{i}\Tr{TX}}[\dif T] &=& \prod_{j=1}^m \int
\exp\Pa{\mathrm{i}t_{jj}x_{jj}}\dif t_{jj}\prod_{1\leqslant i<
j\leqslant m}\int \exp\Pa{\mathrm{i}t_{ij}\Pa{2x_{ij}}}\dif
t_{ij}\\
&=& \prod^m_{j=1}2\pi\delta\Pa{x_{jj}}\times \prod_{1\leqslant
i<j\leqslant m}2\pi\delta\Pa{2x_{ij}}\\
&=&\prod^m_{j=1}2\pi\delta\Pa{x_{jj}}\times \prod_{1\leqslant
i<j\leqslant m}\pi\delta(x_{ij}) = 2^m\pi^{\binom{m+1}{2}}\delta(X).
\end{eqnarray*}
Therefore we get the desired identity.
\end{proof}

\begin{prop}
For $A\in\real^{m\times m},B\in\real^{n\times n}$ and
$X\in\real^{m\times n}$, we have
\begin{eqnarray}\label{eq:AXB-delta}
\delta(AXB) = \abs{\det^{-n}(A)\det^{-m}(B)}\delta(X).
\end{eqnarray}
\end{prop}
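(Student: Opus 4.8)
The plan is to reduce the matrix identity (\ref{eq:AXB-delta}) to the vector case already handled in Proposition~\ref{prop:transformation-delta}, using the observation made just above that $\delta(X)=\delta(\col{X})$ together with the standard vectorization formula $\col{AXB}=(B^\t\otimes A)\col{X}$. First I would recall that for $X\in\real^{m\times n}$ the vectorization $\col{X}$ is a vector in $\real^{mn}$, and that $\delta(X)=\delta(\col{X})$ by the definition of $\delta(X)=\prod_{i,j}\delta(x_{ij})$. Then $\delta(AXB)=\delta(\col{AXB})=\delta\big((B^\t\otimes A)\col{X}\big)$.

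Next I would apply Proposition~\ref{prop:transformation-delta} with the $mn\times mn$ matrix $M:=B^\t\otimes A$ and the vector $\mathbf{x}:=\col{X}\in\real^{mn}$, which gives
\begin{eqnarray}
\delta(AXB) = \frac{1}{\abs{\det(B^\t\otimes A)}}\,\delta(\col{X}) = \frac{1}{\abs{\det(B^\t\otimes A)}}\,\delta(X),
\end{eqnarray}
provided $B^\t\otimes A$ is full rank, which holds precisely when both $A$ and $B$ are invertible. The last step is the determinant computation for a Kronecker product: $\det(B^\t\otimes A)=(\det B^\t)^m(\det A)^n=(\det B)^m(\det A)^n$, where the exponents come from the sizes ($A$ is $m\times m$ contributing $\det A$ once per column-block of which there are $n$, and $B^\t$ is $n\times n$ contributing $\det B^\t$ raised to the block size $m$). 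Taking absolute values yields $\abs{\det(B^\t\otimes A)}=\abs{\det^{n}(A)\det^{m}(B)}$, and substituting gives exactly (\ref{eq:AXB-delta}).

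For the symmetric variant (if one wants $X$ symmetric with $B=A^\t$), one would instead vectorize over the independent entries $\{x_{ij}:i\leqslant j\}$ and track the Jacobian of $X\mapsto AXA^\t$ on the space of symmetric matrices, whose determinant is $\abs{\det(A)}^{m+1}$; but for the rectangular statement as written this is not needed. I expect the only genuine point requiring care is the identification $\delta(AXB)=\delta((B^\t\otimes A)\col{X})$ and the bookkeeping of which exponent ($m$ or $n$) attaches to $\det A$ versus $\det B$ in $\det(B^\t\otimes A)$; everything else is a direct appeal to the vector-case proposition. A purely alternative route, avoiding Kronecker products, would be to write $\delta(AXB)=\frac1{2\pi}$-type Fourier integral $\frac1{(2\pi)^{mn}}\int e^{\mathrm{i}\Tr{T^\t AXB}}[\dif T]$ and substitute $S:=A^\t T B^\t$ with Jacobian $\abs{\det(A)}^{n}\abs{\det(B)}^{m}$, reducing to $\delta(X)=\frac1{(2\pi)^{mn}}\int e^{\mathrm{i}\Tr{S^\t X}}[\dif S]$; this mirrors the first proof of Proposition~\ref{prop:transformation-delta} and is perhaps the cleanest.
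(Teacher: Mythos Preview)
Your proposal is correct and follows essentially the same approach as the paper: identify $\delta(X)$ with $\delta(\col{X})$, apply the vectorization identity for $AXB$, invoke Proposition~\ref{prop:transformation-delta}, and evaluate the Kronecker-product determinant. The only cosmetic discrepancy is that the paper uses the convention $\col{AXB}=(A\otimes B^\t)\col{X}$ (row-stacking) whereas you use $\col{AXB}=(B^\t\otimes A)\col{X}$ (column-stacking); since $\det(A\otimes B^\t)=\det(B^\t\otimes A)=\det^{n}(A)\det^{m}(B)$, this does not affect the argument or the result.
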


\begin{proof}
We have already known that $\delta(AXB) = \delta(\col{AXB})$. Since
$\col{AXB} = (A\ot B^\t)\col{X}$ \cite{Zhang2015vol}, it follows
that
\begin{eqnarray}
\delta(AXB) &=& \delta(\col{AXB}) = \delta\Pa{(A\ot B^\t)\col{X}}\\
&=&\abs{\det^{-1}(A\ot B^\t)}\delta(\col{X}) \\
&=&\abs{\det^{-n}(A)\det^{-m}(B)}\delta(X).
\end{eqnarray}
This completes the proof.
\end{proof}

\begin{prop}
For $A\in\real^{n\times n}$ and $X=X^\t\in\real^{n\times n}$, we
have
\begin{eqnarray}
\delta(AXA^\t) = \abs{\det(A)}^{-(n+1)}\delta(X).
\end{eqnarray}
\end{prop}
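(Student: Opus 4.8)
The plan is to reduce the symmetric case $\delta(AXA^\t)$ to the vectorized picture used throughout Section~3, but to be careful that for symmetric $X$ the ``independent coordinates'' are only the $\binom{n+1}{2}$ entries on and above the diagonal, so the relevant linear map is not $A\ot A$ acting on all of $\real^{n^2}$ but its restriction to the symmetric subspace $\mathrm{Sym}(n)\cong\real^{\binom{n+1}{2}}$. Concretely, $X\mapsto AXA^\t$ is a linear bijection of $\mathrm{Sym}(n)$ whenever $\det(A)\neq0$, and by the change-of-variables property for the vector delta function (Proposition~\ref{prop:transformation-delta}, applied in dimension $\binom{n+1}{2}$), one gets $\delta(AXA^\t)=\abs{J}^{-1}\delta(X)$ where $J$ is the determinant of that restricted linear map. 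So the whole content is the Jacobian computation $J=\det(A)^{n+1}$.

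First I would fix notation: write $\Phi_A\colon\mathrm{Sym}(n)\to\mathrm{Sym}(n)$, $\Phi_A(X)=AXA^\t$. Then $\Phi_{A}\circ\Phi_{B}=\Phi_{AB}$ and $\Phi_{\I_n}=\mathrm{id}$, so $\det\Phi$ is multiplicative in $A$; this lets me factor $A$ into a convenient normal form. The cleanest route is the SVD/polar-type decomposition as in ``The second proof'' of Proposition~\ref{prop:transformation-delta}: any invertible $A$ can be reached, up to the orthogonal factors, by diagonal matrices, and for an orthogonal $Q$ the map $\Phi_Q$ is an isometry of $\mathrm{Sym}(n)$ (it preserves $\Tr{X^2}$), hence $\abs{\det\Phi_Q}=1$ and $\delta(QXQ^\t)=\delta(X)$ by the rotation-invariance already proved. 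Therefore by multiplicativity it suffices to evaluate $\det\Phi_\Lambda$ for $\Lambda=\diag(\lambda_1,\dots,\lambda_n)$.

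For diagonal $\Lambda$ the computation is explicit and is the analogue of the $\Tr{TX}$ bookkeeping in Proposition~\ref{eq:real-matrix-delta-function}: $(\Lambda X\Lambda)_{ij}=\lambda_i\lambda_j x_{ij}$, so in the coordinates $\{x_{ij}: i\leqslant j\}$ the map $\Phi_\Lambda$ is diagonal with eigenvalue $\lambda_i\lambda_j$ on the coordinate $x_{ij}$. Hence
\begin{eqnarray}
\det\Phi_\Lambda = \prod_{i\leqslant j}\lambda_i\lambda_j
= \prod_{i=1}^n\lambda_i^2\cdot\prod_{i<j}\lambda_i\lambda_j
= \Pa{\prod_{i=1}^n\lambda_i}^{n+1} = \det(\Lambda)^{n+1},
\end{eqnarray}
since each index $i$ occurs in exactly $n+1$ of the pairs $\{(k,l):k\leqslant l\}$ (once as $(i,i)$ and once with each of the other $n-1$ indices, contributing $n-1$, plus... — more simply, $\sum_{i\leqslant j}(\mathbf{1}_{k=i}+\mathbf{1}_{k=j})=n+1$ for each fixed $k$). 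Combining with the orthogonal factors via $\abs{\det\Phi_A}=\abs{\det\Phi_L}\,\abs{\det\Phi_\Lambda}\,\abs{\det\Phi_{R^\t}}=\abs{\det(\Lambda)}^{n+1}=\abs{\det(A)}^{n+1}$ and then quoting the vector change-of-variables identity gives $\delta(AXA^\t)=\abs{\det(A)}^{-(n+1)}\delta(X)$.

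The only genuine subtlety — and the step I would present most carefully — is the bookkeeping of the off-diagonal coordinates: the map $X\mapsto AXA^\t$ on all $n^2$ entries has Jacobian $\det(A\ot A)=\det(A)^{2n}$, and one must not accidentally use that. The symmetric constraint $x_{ij}=x_{ji}$ means we pass to the $\binom{n+1}{2}$-dimensional quotient, where the $2\times2$ mixing block on $(x_{ij},x_{ji})$ collapses to the single scalar $\lambda_i\lambda_j$ rather than contributing $(\lambda_i\lambda_j)^2$; getting the exponent $n+1$ right (rather than $2n$) hinges on exactly this. I would also remark that for $A$ non-invertible both sides vanish in the distributional sense, so the hypothesis $\det(A)\neq0$ is the natural one, and note the special case $A=a\I_n$ recovers $\delta(a^2 X)=\abs{a}^{-(n+1)}\delta(X)=\abs{a}^{-2\binom{n+1}{2}/n}\cdots$ — consistent with $\delta$ being homogeneous of degree $-\binom{n+1}{2}$ on $\mathrm{Sym}(n)$.
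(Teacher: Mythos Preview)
Your proof is correct and takes a genuinely different route from the paper. The paper applies the Fourier integral representation \eqref{eq:real-matrix-delta-function} to $\delta(AXA^\t)$, uses cyclicity of the trace to rewrite the exponent as $\Tr{(A^\t TA)X}$, and then substitutes $S=A^\t TA$, invoking the Jacobian formula $[\dif S]=\abs{\det(A)}^{n+1}[\dif T]$ from an external reference (Proposition~2.8 of \cite{Zhang2015vol}). You instead bypass the Fourier machinery entirely: you apply Proposition~\ref{prop:transformation-delta} directly on $\mathrm{Sym}(n)\cong\real^{\binom{n+1}{2}}$ and compute the Jacobian $\abs{\det\Phi_A}=\abs{\det(A)}^{n+1}$ from scratch via the SVD reduction to diagonal matrices --- in effect supplying an independent derivation of the very Jacobian the paper cites. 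Your argument is more self-contained and parallels ``the second proof'' of Proposition~\ref{prop:transformation-delta}; the paper's argument is more uniform with its later treatment of the Hermitian case (Proposition~\ref{prop:AXA*}) and leverages the Fourier representation it has already developed. One small slip in your closing consistency check: for $A=a\I_n$ the formula gives $\abs{\det(a\I_n)}^{-(n+1)}=\abs{a}^{-n(n+1)}$, not $\abs{a}^{-(n+1)}$; this is exactly $\abs{a^2}^{-\binom{n+1}{2}}$, consistent with the homogeneity degree you state.
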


\begin{proof}
By using Eq.~\eqref{eq:real-matrix-delta-function}, it follows that
\begin{eqnarray*}
\delta(AXA^\t) &=& 2^{-n}\pi^{-\frac{n(n+1)}2}\int
e^{\mathrm{i}\Tr{TAXA^\t}}[\dif T]\\
&=&2^{-n}\pi^{-\frac{n(n+1)}2}\int e^{\mathrm{i}\Tr{A^\t TAX}}[\dif
T].
\end{eqnarray*}
Let $S=A^\t TA$. Then we have $[\dif S]=\abs{\det(A)}^{n+1}[\dif T]$
(see Proposition 2.8 in \cite{Zhang2015vol}). From this, we see that
\begin{eqnarray}
\delta(AXA^\t) =\abs{\det(A)}^{-(n+1)}
2^{-n}\pi^{-\frac{n(n+1)}2}\int e^{\mathrm{i}\Tr{SX}}[\dif S] =
\abs{\det(A)}^{-(n+1)}\delta(X).
\end{eqnarray}
This completes the proof.
\end{proof}

Note that Dirac delta function for complex number is defined by
$\delta(z):=\delta(\re(z))\delta(\im(z))$, where
$z=\re(z)+\sqrt{-1}\im(z)$ for $\re(z),\im(z)\in\real$. The complex
number $z$ can be realized as a 2-dimensional real vector
$$
z\mapsto \widehat z:=\Br{\begin{array}{c}
               \re(z) \\
               \im(z)
             \end{array}
}.
$$
Thus
\begin{eqnarray}
\delta(z) = \delta(\widehat z)=\delta\Pa{\Br{\begin{array}{c}
               \re(z) \\
               \im(z)
             \end{array}
}}.
\end{eqnarray}
Then for $c\in\complex$, $cz$ is represented as
$$
cz \mapsto \Br{\begin{array}{c}
               \re(c)\re(z)-\im(c)\im(z) \\
                \im(c)\re(z)+\re(c)\im(z)
             \end{array}
} = \Br{\begin{array}{cc}
          \re(c) & -\im(c) \\
          \im(c) & \re(c)
        \end{array}
}\Br{\begin{array}{c}
               \re(z) \\
               \im(z)
             \end{array}
},
$$
we have
\begin{eqnarray*}
\delta(cz) &=& \delta\Pa{\Br{\begin{array}{cc}
          \re(c) & -\im(c) \\
          \im(c) & \re(c)
        \end{array}
}\Br{\begin{array}{c}
               \re(z) \\
               \im(z)
             \end{array}
}}\\
&=&\abs{\det^{-1}\Pa{\Br{\begin{array}{cc}
          \re(c) & -\im(c) \\
          \im(c) & \re(c)
        \end{array}
}}}\delta\Pa{\Br{\begin{array}{c}
               \re(z) \\
               \im(z)
             \end{array}
}} = \abs{c}^{-2}\delta(z).
\end{eqnarray*}
Therefore we have
\begin{eqnarray}
\delta(cz)=\abs{c}^{-2}\delta(z).
\end{eqnarray}
Furthermore, if $\bsz\in\complex^n$, then
\begin{eqnarray}
\delta(c\bsz)=\abs{c}^{-2n}\delta(\bsz).
\end{eqnarray}

\begin{prop}
For a full-ranked complex matrix $A\in\complex^{n\times n}$, it
holds that
\begin{eqnarray}
\delta(A\bsz) =
\frac1{\abs{\det(AA^*)}}\delta(\bsz)=\abs{\det(A)}^{-2}\delta(\bsz),\quad
\bsz\in\complex^n.
\end{eqnarray}
In particular, for any unitary matrix $U\in\rU(n)$, we have
\begin{eqnarray}
\delta(U\bsz)=\delta(\bsz).
\end{eqnarray}
\end{prop}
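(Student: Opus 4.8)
The plan is to reduce the complex statement to the real case already established in Proposition~\ref{prop:transformation-delta} via the standard realification of complex matrices. Recall that a complex $n$-vector $\mathbf{z}=\re(\mathbf{z})+\sqrt{-1}\,\im(\mathbf{z})$ is identified with the real $2n$-vector $\widehat{\mathbf{z}}:=[\re(\mathbf{z})^\t,\im(\mathbf{z})^\t]^\t\in\real^{2n}$, so that by the very definition of the delta function on $\complex^n$ we have $\delta(\mathbf{z})=\delta(\widehat{\mathbf{z}})$ (a product of $2n$ one-dimensional delta functions over the independent real and imaginary coordinates). Writing $A=\re(A)+\sqrt{-1}\,\im(A)$, a direct computation of $A\mathbf{z}=(\re(A)+\sqrt{-1}\,\im(A))(\re(\mathbf{z})+\sqrt{-1}\,\im(\mathbf{z}))$ and separation into real and imaginary parts gives $\widehat{A\mathbf{z}}=\widehat A\,\widehat{\mathbf{z}}$, where
$$
\widehat A := \Br{\begin{array}{cc} \re(A) & -\im(A) \\ \im(A) & \re(A) \end{array}}\in\real^{2n\times 2n};
$$
this is precisely the $2\times 2$ block identity carried out above for $n=1$, applied coordinatewise. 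Hence $\delta(A\mathbf{z})=\delta(\widehat A\,\widehat{\mathbf{z}})$.

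Next I would observe that $A$ being full-ranked over $\complex$ forces $\widehat A$ to be full-ranked over $\real$ (since, as shown below, $\det(\widehat A)=\abs{\det(A)}^2\neq0$), so Proposition~\ref{prop:transformation-delta} applies and yields
$$
\delta(A\mathbf{z}) = \delta(\widehat A\,\widehat{\mathbf{z}}) = \abs{\det(\widehat A)}^{-1}\delta(\widehat{\mathbf{z}}) = \abs{\det(\widehat A)}^{-1}\delta(\mathbf{z}).
$$
It therefore remains only to evaluate $\det(\widehat A)$.

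The key step is the identity $\det(\widehat A)=\abs{\det(A)}^2$. To prove it, conjugate $\widehat A$ by the fixed invertible matrix $P=\frac1{\sqrt2}\Br{\begin{array}{cc}\I_n & \I_n \\ \sqrt{-1}\,\I_n & -\sqrt{-1}\,\I_n\end{array}}$; a short block multiplication shows that $P^{-1}\widehat A P$ is block-diagonal with diagonal blocks $\overline A$ and $A$, so $\det(\widehat A)=\det(\overline A)\det(A)=\overline{\det(A)}\,\det(A)=\abs{\det(A)}^2$. Combining this with the previous display gives $\delta(A\mathbf{z})=\abs{\det(A)}^{-2}\delta(\mathbf{z})$; and since $\det(AA^*)=\det(A)\overline{\det(A)}=\abs{\det(A)}^2\geqslant0$, this equals $\abs{\det(AA^*)}^{-1}\delta(\mathbf{z})$ as well. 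The unitary case is immediate: for $U\in\rU(n)$ one has $\abs{\det(U)}=1$, hence $\delta(U\mathbf{z})=\delta(\mathbf{z})$.

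I expect the only genuine content to be the determinant identity $\det(\widehat A)=\abs{\det(A)}^2$; everything else is bookkeeping about the realification map $A\mapsto\widehat A$ being multiplicative and compatible with the definition of $\delta$ on $\complex^n$. If one prefers to avoid exhibiting $P$ explicitly, an alternative is to verify $\det(\widehat A)=\abs{\det(A)}^2$ first for diagonal $A$ (where it is obvious) and then extend to all invertible $A$ by continuity of both sides in $A$ together with density; or to invoke the singular value decomposition $A=U\Sigma V^*$ and the multiplicativity of both $\det(\widehat{\,\cdot\,})$ and $\abs{\det(\cdot)}^2$.
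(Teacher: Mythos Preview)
Your proof is correct and follows essentially the same route as the paper: both realify $A\mathbf{z}$ as $\widehat A\,\widehat{\mathbf z}$ and invoke Proposition~\ref{prop:transformation-delta} to reduce to computing $\det(\widehat A)$. The only difference is that the paper simply quotes $\abs{\det(\widehat A)}=\abs{\det(AA^*)}$ from \cite{Mathai1997,Zhang2015vol}, whereas you supply a self-contained proof via the block-diagonalizing similarity $P^{-1}\widehat A P=\diag(\overline A,A)$; this is a nice addition but not a genuinely different strategy.
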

\begin{proof}
Since $A\bsz$ can be represented as
$$
\widehat A\widehat{\bsz}=\Br{\begin{array}{cc}
      \re(A) & -\im(A) \\
      \im(A) & \re(A)
    \end{array}
}\Br{\begin{array}{c}
               \re(\bsz) \\
               \im(\bsz)
             \end{array}
}
$$
it follows that
\begin{eqnarray*}
\delta(A\bsz) &=& \delta\Pa{\Br{\begin{array}{cc}
      \re(A) & -\im(A) \\
      \im(A) & \re(A)
    \end{array}
}\Br{\begin{array}{c}
               \re(\bsz) \\
               \im(\bsz)
             \end{array}
}}\\
&=& \abs{\det^{-1}\Pa{\Br{\begin{array}{cc}
      \re(A) & -\im(A) \\
      \im(A) & \re(A)
    \end{array}
}}}\delta\Pa{\Br{\begin{array}{c}
               \re(\bsz) \\
               \im(\bsz)
             \end{array}
}}\\
&=& \abs{\det^{-1}(AA^*)}\delta(\bsz),
\end{eqnarray*}
where $\abs{\det\Pa{\Br{\begin{array}{cc}
      \re(A) & -\im(A) \\
      \im(A) & \re(A)
    \end{array}
}}}=\abs{\det(AA^*)}$ can be found in
\cite{Mathai1997,Zhang2015vol}. Therefore we have
\begin{eqnarray}
\delta(A\bsz) = \abs{\det(A)}^{-2}\delta(\bsz),\quad
\bsz\in\complex^n.
\end{eqnarray}
If $A=U$ is a unitary matrix, then $\abs{\det(U)}=1$. The desired
result is obtained.
\end{proof}

\begin{proof}[The second proof]
Let $\widehat A = \Br{\begin{array}{cc}
      \re(A) & -\im(A) \\
      \im(A) & \re(A)
    \end{array}
}$ and $\widehat{\bsz} = \Br{\begin{array}{c}
               \re(\bsz) \\
               \im(\bsz)
             \end{array}
}$. Then by Proposition~\ref{prop:transformation-delta},
\begin{eqnarray}
\delta(A\bsz) = \delta(\widehat A \widehat{\bsz}) =
\abs{\det^{-1}(\widehat A)}\delta(\widehat{\bsz}) =
\abs{\det^{-1}(AA^*)}\delta(\bsz).
\end{eqnarray}
This completes the proof.
\end{proof}

\begin{definition}[Dirac delta
function of complex-matrix argument] (i) For an $m\times n$ complex
matrix $Z=[z_{ij}]\in\complex^{m\times n}$, the matrix delta
function $\delta(Z)$ is defined as
\begin{eqnarray}
\delta(Z):=\prod^m_{i=1}\prod^n_{j=1}\delta\Pa{\re(z_{ij})}\delta\Pa{\im(z_{ij})}.
\end{eqnarray}
In particular, the vector delta function is just a special case
where
$n=1$ in the matrix case.\\
(ii) For an $m\times m$ Hermitian complex matrix
$X=[x_{ij}]\in\complex^{m\times m}$, the matrix delta function
$\delta(X)$ is defined as
\begin{eqnarray}
\delta(X):=\prod_j \delta(x_{jj})\prod_{i<
j}\delta\Pa{\re(x_{ij})}\delta\Pa{\im(x_{ij})}.
\end{eqnarray}
\end{definition}
The Fourier integral representation of Dirac delta function can be
extended to the matrix case. The following proposition is very
important in this paper.
\begin{prop}\label{prop:Fourier-Matrix}
For an $m\times m$ Hermitian complex matrix $X\in\complex^{m\times
m}$, we have
\begin{eqnarray}\label{eq:matrix-delta-function}
\delta(X) = \frac1{2^m\pi^{m^2}}\int e^{\mathrm{i}\Tr{TX}}[\dif T],
\end{eqnarray}
where $T=[t_{ij}]$ is also an $m\times m$ Hermitian complex matrix,
and $[\dif T]:=\prod_j\dif t_{jj}\prod_{i<j}\dif \re(t_{ij})\dif
\im(t_{ij})$.
\end{prop}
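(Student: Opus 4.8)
The plan is to reproduce, almost verbatim, the argument used for the real symmetric case in Eq.~\eqref{eq:real-matrix-delta-function}, with the symmetric parametrization replaced by the Hermitian one and with careful tracking of real and imaginary parts. The whole content is: expand $\Tr{TX}$ into a sum over the independent real coordinates of $T$, note that the exponential and the measure $[\dif T]$ both factor accordingly, evaluate each one-dimensional Gaussian-type integral using the scalar representation $\delta(t)=\frac1{2\pi}\int e^{\mathrm{i}tx}\dif x$ from \eqref{eq:inverse-of-delta} together with the one-dimensional scaling property $\delta(ax)=\abs{a}^{-1}\delta(x)$, and finally bookkeep the constants.

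First I would expand the trace. Since $T$ and $X$ are Hermitian, $t_{jj},x_{jj}\in\real$ and $t_{ji}=\overline{t_{ij}}$, $x_{ji}=\overline{x_{ij}}$, so writing $\Tr{TX}=\sum_j t_{jj}x_{jj}+\sum_{i\neq j}t_{ij}x_{ji}$ and pairing the term indexed by $(i,j)$ with the one indexed by $(j,i)$ gives, via $t_{ij}x_{ji}+t_{ji}x_{ij}=2\re\big(t_{ij}\overline{x_{ij}}\big)$,
$$\Tr{TX}=\sum_{j=1}^m t_{jj}x_{jj}+2\sum_{i<j}\big(\re(t_{ij})\re(x_{ij})+\im(t_{ij})\im(x_{ij})\big).$$
This is the Hermitian analogue of the display in the symmetric-case proof; the only mildly delicate point is checking that the imaginary parts cancel within each off-diagonal pair, which is what makes every coordinate of $T$ couple to exactly one coordinate of $X$.

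Next, because $[\dif T]=\prod_j\dif t_{jj}\prod_{i<j}\dif\re(t_{ij})\dif\im(t_{ij})$ runs over precisely the coordinates appearing (separately) in the exponent, the integral splits as
$$\int e^{\mathrm{i}\Tr{TX}}[\dif T]=\prod_{j=1}^m\Big(\int_\real e^{\mathrm{i}t_{jj}x_{jj}}\dif t_{jj}\Big)\prod_{i<j}\Big(\int_\real e^{2\mathrm{i}\re(t_{ij})\re(x_{ij})}\dif\re(t_{ij})\Big)\Big(\int_\real e^{2\mathrm{i}\im(t_{ij})\im(x_{ij})}\dif\im(t_{ij})\Big).$$
Each of the $m$ diagonal integrals equals $2\pi\delta(x_{jj})$, and each of the $2\binom m2$ off-diagonal integrals equals $2\pi\delta(2\,\cdot)=\pi\delta(\cdot)$ by the scaling property. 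Collecting constants, we get $m$ factors of $2\pi$ and $m(m-1)$ factors of $\pi$, i.e. $(2\pi)^m\pi^{m(m-1)}=2^m\pi^{m^2}$, while the surviving product of delta functions is $\prod_j\delta(x_{jj})\prod_{i<j}\delta(\re x_{ij})\delta(\im x_{ij})=\delta(X)$ by the definition of $\delta(X)$ for Hermitian $X$. Hence $\int e^{\mathrm{i}\Tr{TX}}[\dif T]=2^m\pi^{m^2}\delta(X)$, which is the assertion.

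I do not expect any genuine analytic obstacle here: everything reduces to the one-dimensional facts already established. The one place to be careful is the constant count — each of the $\binom m2$ off-diagonal index pairs contributes \emph{two} one-dimensional integrals (a real part and an imaginary part), each carrying a factor $\pi$, which is why the exponent of $\pi$ is $m^2$ rather than $\binom{m+1}{2}$ as in the real symmetric case. I would state this explicitly so the reader sees where the Hermitian count departs from the symmetric one.
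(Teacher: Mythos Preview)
Your proposal is correct and follows essentially the same argument as the paper: expand $\Tr{TX}$ into diagonal and off-diagonal real coordinates, factor the integral, apply the one-dimensional Fourier representation and scaling property, and collect the constants $(2\pi)^m\pi^{m(m-1)}=2^m\pi^{m^2}$. The paper's proof is line-for-line the same computation, including the explicit factorization and the reduction $2\pi\delta(2\re(x_{ij}))=\pi\delta(\re(x_{ij}))$.
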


\begin{proof}
Indeed, we know that
\begin{eqnarray*}
\Tr{TX} &=& \sum_{j=1}^m t_{jj}x_{jj} + \sum_{i\neq j}\Pa{\bar
t_{ij}x_{ij}}= \sum_{j=1}^m \re(t_{jj})\re(x_{jj}) +
\sum_{1\leqslant i<
j\leqslant m}\Pa{\bar t_{ij}x_{ij} + t_{ij}\bar x_{ij}}\\
&=&\sum_{j=1}^m t_{jj}x_{jj} + \sum_{1\leqslant i< j\leqslant
m}2\Pa{\re(t_{ij})\re(x_{ij}) + \im(t_{ij})\im(x_{ij})},
\end{eqnarray*}
implying that
\begin{eqnarray*}
\int e^{\mathrm{i}\Tr{TX}}[\dif T] &=& \prod_{j=1}^m \int
\exp\Pa{\mathrm{i}t_{jj}x_{jj}}\dif t_{jj}\\
&&\times\prod_{1\leqslant i< j\leqslant m}\int
\exp\Pa{\mathrm{i}\re(t_{ij})\Pa{2\re(x_{ij})}}\dif
\re(t_{ij})\\
&&\times\prod_{1\leqslant i< j\leqslant
m}\int\exp\Pa{\mathrm{i}\im(t_{ij}) \Pa{2\im(x_{ij})}}\dif
\im(t_{ij})\\
&=&  \prod^m_{j=1}2\pi\delta\Pa{x_{jj}}\times \prod_{1\leqslant
i<j\leqslant m}2\pi\delta\Pa{2\re(x_{ij})}2\pi\delta\Pa{2\im(x_{ij})}\\
&=& \prod^m_{j=1}2\pi\delta\Pa{x_{jj}}\times \prod_{1\leqslant
i<j\leqslant m}\pi\delta\Pa{\re(x_{ij})}\pi\delta\Pa{\im(x_{ij})}\\
&=& \Pa{2\pi}^m\Pa{\pi^2}^{\binom{m}{2}}\prod_j\delta(x_{jj})\prod_{i< j}\delta\Pa{\re(x_{ij})}\delta\Pa{\im(x_{ij})}\\
&=& 2^m\pi^{m^2}\delta(X).
\end{eqnarray*}
Therefore we get the desired identity.
\end{proof}

\begin{remark}
Indeed, since
\begin{eqnarray*}
\Tr{T^{\mathrm{off}}X^{\mathrm{off}}} =
\sum_{i<j}2(\re(t_{ij})\re(x_{ij}) + \im(t_{ij})\im(x_{ij}))
\end{eqnarray*}
and
\begin{eqnarray*}
[\dif T^{\mathrm{off}}] = \prod_{i<j}\dif\re(t_{ij})\dif
\im(t_{ij}),
\end{eqnarray*}
it follows that
\begin{eqnarray*}
&&\int[\dif T^{\mathrm{off}}]\exp\Pa{\mathrm{i}\Tr{T^{\mathrm{off}}X^{\mathrm{off}}}}\\
&&=
\prod_{i<j}\int\dif\re(t_{ij})\exp\Pa{\mathrm{i}\re(t_{ij})(2\re(x_{ij}))}\int\dif\im(t_{ij})\exp\Pa{\mathrm{i}\im(t_{ij})(2\im(x_{ij}))}\\
&&=\prod_{i<j}2\pi\delta(2\re(x_{ij}))\cdot2\pi\delta(2\im(x_{ij}))
=
\prod_{i<j}\pi\delta(\re(x_{ij}))\pi\delta(\im(x_{ij}))\\
&&=\pi^{2\binom{m}{2}}\prod_{i<j}\delta(\re(x_{ij}))\delta(\im(x_{ij}))
= \pi^{m(m-1)}\delta(X^{\mathrm{off}}).
\end{eqnarray*}
From the above discussion, we see that
\eqref{eq:matrix-delta-function} can be separated into two
identities below:
\begin{eqnarray}
\delta(X^\diag) &=& \frac1{(2\pi)^m}\int[\dif
T^\diag]e^{\mathrm{i}\Tr{T^\diag X^\diag}},\\
\delta(X^\mathrm{off}) &=& \frac1{\pi^{m(m-1)}}\int[\dif
T^\mathrm{off}]e^{\mathrm{i}\Tr{T^\mathrm{off}X^\mathrm{off}}}.
\end{eqnarray}
Note that the identity in Proposition~\ref{prop:Fourier-Matrix} is
used in deriving the joint distribution of diagonal part of Wishart
matrix ensemble \cite{Zhang2016}, and it is also used in obtaining
derivative principle for unitarily invariant random matrix ensemble
in \cite{Mejia2017}. More generally, the fact can be found in
\cite{Christandl2014} that the derivative principle for invariant
measure is used to investigate the joint distribution of eigenvalues
of local states from the same multipartite pure states.
\end{remark}

\begin{prop}
For $A\in\complex^{m\times m}$ and  $B\in\complex^{n\times n}$, let
$Z\in\complex^{m\times n}$. Then we have
\begin{eqnarray}
\delta(AZB) = \det^{-n}(AA^*)\det^{-m}(BB^*)\delta(Z).
\end{eqnarray}
\end{prop}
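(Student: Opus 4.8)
The plan is to mirror the proof of the real-matrix identity \eqref{eq:AXB-delta}, with the only change being that the vector scaling law for real full-ranked matrices is replaced by its complex counterpart established above, namely $\delta(M\mathbf{w}) = \abs{\det(M)}^{-2}\delta(\mathbf{w})$ for any full-ranked $M\in\complex^{N\times N}$ and $\mathbf{w}\in\complex^{N}$.

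First I would record that, exactly as in the real case, the complex-matrix delta function is by definition the product of one-dimensional delta functions over all the independent real and imaginary coordinates of its argument, hence it is insensitive to any reordering of those coordinates; in particular $\delta(AZB) = \delta(\col{AZB})$ and $\delta(Z) = \delta(\col{Z})$, where $\col{\cdot}$ denotes vectorization. Next I would invoke the standard identity $\col{AZB} = (A\ot B^\t)\col{Z}$ (see \cite{Zhang2015vol}), so that $\delta(AZB) = \delta\bigl((A\ot B^\t)\col{Z}\bigr)$. Since $A$ and $B$ are full-ranked, so is $A\ot B^\t\in\complex^{mn\times mn}$, and applying the complex vector scaling law with $M = A\ot B^\t$ and $\mathbf{w} = \col{Z}$ gives $\delta(AZB) = \abs{\det(A\ot B^\t)}^{-2}\delta(\col{Z}) = \abs{\det(A\ot B^\t)}^{-2}\delta(Z)$.

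It then remains only to simplify the scalar prefactor. Using $\det(A\ot B^\t) = \det(A)^{n}\det(B^\t)^{m} = \det(A)^{n}\det(B)^{m}$ together with the elementary fact $\abs{\det(C)}^{2} = \det(C)\overline{\det(C)} = \det(C)\det(C^{*}) = \det(CC^{*})$ valid for any square complex matrix $C$, one obtains
\begin{eqnarray*}
\abs{\det(A\ot B^\t)}^{-2} = \abs{\det(A)}^{-2n}\abs{\det(B)}^{-2m} = \det(AA^{*})^{-n}\det(BB^{*})^{-m},
\end{eqnarray*}
which is exactly the claimed identity; note that $\det(AA^{*})$ and $\det(BB^{*})$ are positive reals, so no absolute-value signs are needed on the right-hand side.

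None of these steps is a genuine obstacle. The only point demanding a moment's care is the very first one, $\delta(AZB)=\delta(\col{AZB})$: for a rectangular complex matrix the vectorization merely lists the $2mn$ real coordinates in a different order, and the matrix delta function, being a product over precisely those coordinates, is unchanged — this was already observed in the real case and transfers verbatim. If one prefers to avoid the Kronecker-product determinant, there is an equally elementary alternative: handle $AZ$ column by column (each column of $Z$ lies in $\complex^{m}$ and is acted on by $A$, contributing $\abs{\det(A)}^{-2}=\det(AA^{*})^{-1}$ per column, hence $\det(AA^{*})^{-n}$ in total), then handle $(AZ)B$ row by row to pick up $\det(BB^{*})^{-m}$, and compose the two scalings.
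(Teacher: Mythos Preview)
Your argument is correct. It differs from the paper's proof only in the order of reduction: the paper passes to the real representation $\widehat{(\cdot)}$ at the matrix level, writing $\widehat{AZB}=\widehat A\,\widehat Z\,\widehat B$ and then invoking the real-matrix identity \eqref{eq:AXB-delta} together with $\abs{\det(\widehat A)}=\det(AA^{*})$; you instead stay over $\complex$, vectorize, and invoke the complex-vector scaling law $\delta(M\mathbf w)=\abs{\det(M)}^{-2}\delta(\mathbf w)$ directly with $M=A\otimes B^{\t}$. Both routes ultimately rest on the same Jacobian computation, but yours is the literal complex analogue of the paper's proof of \eqref{eq:AXB-delta} and is arguably cleaner, since it avoids having to interpret $\delta(\widehat Z)$ on the constrained subspace of $\real^{2m\times 2n}$ cut out by the block structure of $\widehat Z$. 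Your alternative column-by-column/row-by-row argument at the end is also a valid and pleasant way to compute the same Jacobian without Kronecker products.
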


\begin{proof}
Now $AZB$ can be represented as, via $\widehat{XY}=\widehat
X\widehat Y$,
\begin{eqnarray}
\widehat{AZB}=\widehat A\widehat Z\widehat B=\Br{\begin{array}{cc}
      \re(A) & -\im(A) \\
      \im(A) & \re(A)
    \end{array}
}\Br{\begin{array}{cc}
      \re(Z) & -\im(Z) \\
      \im(Z) & \re(Z)
    \end{array}
}\Br{\begin{array}{cc}
      \re(B) & -\im(B) \\
      \im(B) & \re(B)
    \end{array}
}
\end{eqnarray}
Then from Eq.~\eqref{eq:AXB-delta}, we see that
\begin{eqnarray}
\delta(AZB)&=& \delta(\widehat{AZB})=\delta(\widehat A\widehat
Z\widehat B) =
\abs{\det^{-n}(\widehat A)\det^{-m}(\widehat B)}\delta(\widehat Z)\\
&=&\det^{-n}(AA^*)\det^{-m}(BB^*)\delta(Z).
\end{eqnarray}
The result is proven.
\end{proof}

\begin{prop}\label{prop:AXA*}
For $A\in\complex^{m\times m}$, and $m\times m$ Hermitian complex
matrix $X\in\complex^{m\times m}$, we have
\begin{eqnarray}
\delta(AXA^*) = \abs{\det(AA^*)}^{-m}\delta(X).
\end{eqnarray}
\end{prop}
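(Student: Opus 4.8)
The plan is to reduce the Hermitian conjugation $X \mapsto AXA^*$ to the already-established Fourier representation for Hermitian matrices, exactly paralleling the real-symmetric proof of $\delta(AXA^\t) = \abs{\det(A)}^{-(n+1)}\delta(X)$. Starting from Proposition~\ref{prop:Fourier-Matrix}, I would write
\begin{eqnarray*}
\delta(AXA^*) = \frac1{2^m\pi^{m^2}}\int e^{\mathrm{i}\Tr{T(AXA^*)}}[\dif T] = \frac1{2^m\pi^{m^2}}\int e^{\mathrm{i}\Tr{(A^*TA)X}}[\dif T],
\end{eqnarray*}
using cyclicity of the trace. Note that $S := A^*TA$ is again Hermitian whenever $T$ is Hermitian, so the substitution $T \mapsto S$ stays within the space of $m\times m$ Hermitian matrices over which the integral is taken.

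The crux is the Jacobian of the map $T \mapsto A^*TA$ on the real vector space $\her{\complex^m}$ (dimension $m^2$). I would invoke the known transformation rule $[\dif S] = \abs{\det(AA^*)}^{m}[\dif T]$ — this is the complex Hermitian analogue of the identity $[\dif S] = \abs{\det(A)}^{n+1}[\dif T]$ cited from Proposition~2.8 of \cite{Zhang2015vol} in the real-symmetric case; the exponent changes from $n+1$ to $2m/2 \cdot \ldots$ Let me restate: for the real symmetric case the exponent is $m+1$, and for the Hermitian case the corresponding exponent is $m$ relative to $\abs{\det(AA^*)}$ (equivalently $2m$ relative to $\abs{\det A}$ when $A$ is invertible, since $\abs{\det(AA^*)} = \abs{\det A}^2$). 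Granting this Jacobian, the change of variables gives
\begin{eqnarray*}
\delta(AXA^*) = \abs{\det(AA^*)}^{-m}\frac1{2^m\pi^{m^2}}\int e^{\mathrm{i}\Tr{SX}}[\dif S] = \abs{\det(AA^*)}^{-m}\delta(X),
\end{eqnarray*}
again by Proposition~\ref{prop:Fourier-Matrix}, which is the claimed identity.

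The main obstacle is justifying the Jacobian factor $\abs{\det(AA^*)}^m$ cleanly. The quick route, which I would prefer for this pedagogical paper, is to first verify it for $A = U$ unitary (where $\abs{\det(UU^*)}=1$ and the map $T \mapsto U^*TU$ is an isometry of $\her{\complex^m}$ in the Hilbert–Schmidt inner product, hence Jacobian $1$) and for $A$ positive diagonal, $A = \diag(a_1,\ldots,a_m)$ with $a_j>0$, where $(A^*TA)_{ij} = a_i a_j t_{ij}$ so that each diagonal coordinate scales by $a_j^2$ and each off-diagonal pair $(\re t_{ij}, \im t_{ij})$ scales by $a_i a_j$ in each of its two real components, giving total Jacobian $\prod_j a_j^2 \cdot \prod_{i<j}(a_i a_j)^2 = \prod_j a_j^{2m} = \abs{\det A}^{2m} = \abs{\det(AA^*)}^m$. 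Then the polar (or singular value) decomposition $A = U\Sigma V^*$ together with multiplicativity of Jacobians under composition — and the fact that conjugation $T\mapsto W^*TW$ by the unitaries contributes factor $1$ — yields the general case. Alternatively, one can bypass the Jacobian entirely by observing $\delta(AXA^*) = \delta(\widehat{A}\,\widehat{X}\,\widehat{A}^\t)$ via the real representation $\widehat{\cdot}$ and applying the real-symmetric proposition directly, though some care is needed because $\widehat{X}$ is a $2m\times 2m$ real symmetric matrix of a special block form (not a generic symmetric matrix), so the bookkeeping of which coordinates are independent must match the definition of $\delta$ for Hermitian $X$; I expect this to be the more error-prone path and would relegate it to a remark.
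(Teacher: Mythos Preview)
Your proof is correct and follows essentially the same route as the paper: both start from the Fourier representation of Proposition~\ref{prop:Fourier-Matrix}, use cyclicity of the trace to write $\Tr{T(AXA^*)}=\Tr{(A^*TA)X}$, substitute $S=A^*TA$, and invoke the Jacobian $[\dif S]=\abs{\det(AA^*)}^m[\dif T]$ to finish. The only difference is that the paper cites this Jacobian directly from Proposition~3.4 of \cite{Zhang2015vol}, whereas you supply a self-contained derivation via the polar/SVD decomposition.
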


\begin{proof}
By using the Fourier transform of the matrix delta function (see
Eq.~\eqref{eq:matrix-delta-function})
\begin{eqnarray}
\delta(AXA^*) = \frac1{2^m\pi^{m^2}}\int
e^{\mathrm{i}\Tr{TAXA^*}}[\dif T] = \frac1{2^m\pi^{m^2}}\int
e^{\mathrm{i}\Tr{A^*TAX}}[\dif T],
\end{eqnarray}
where $T=[t_{ij}]$ is also an $m\times m$ Hermitian complex matrix,
and $[\dif T]:=\prod_j\dif t_{jj}\prod_{i<j}\dif \re(t_{ij})\dif
\im(t_{ij})$.

Let $H=A^*TA$. Then we have (see Proposition 3.4 in
\cite{Zhang2015vol}):
\begin{eqnarray}
[\dif H] = \abs{\det(AA^*)}^m[\dif T].
\end{eqnarray}
Thus
\begin{eqnarray}
\delta(AXA^*) = \abs{\det(AA^*)}^{-m}\frac1{2^m\pi^{m^2}}\int
e^{\mathrm{i}\Tr{HX}}[\dif H]=\abs{\det(AA^*)}^{-m}\delta(X),
\end{eqnarray}
We are done.
\end{proof}

%=======================================================%
\section{Applications}
%=======================================================%

\subsection{Joint distribution of eigenvalues of Wishart matrix ensemble}

The first application is to calculate the joint distribution of
Wishart matrix ensemble \cite{James1964}. Note that the so-called
Wishart matrix ensemble is the set of all complex matrices of the
form $W=ZZ^\dagger$ with $Z$ an $m\times n(m\leqslant n)$ complex
Gaussian matrix, i.e., a matrix with all entries being standard
complex Gaussian random variables,
$$
\varphi(Z) = \frac1{\pi^{mn}}\exp\Pa{-\Tr{ZZ^\dagger}},
$$
the distribution density of $W$ is given by
\begin{eqnarray}\label{eq:Wishart-distribution}
\bP(W) &=& \int \delta\Pa{W-ZZ^\dagger} \varphi(Z)[\dif Z]\\
&=&\frac1{\pi^{mn}} \int
\delta\Pa{W-ZZ^\dagger}\exp\Pa{-\Tr{ZZ^\dagger}}[\dif Z].
\end{eqnarray}
That is,
\begin{eqnarray}
\bP(W) &=& \frac1{\pi^{mn}} e^{-\Tr{W}}\int
\delta\Pa{W-ZZ^\dagger}[\dif Z].
\end{eqnarray}
Let $Z=\sqrt{W}Y$. Then $ZZ^\dagger = \sqrt{W}YY^\dagger\sqrt{W}$
and $[\dif Z] = \det(W)^n[\dif Y]$. By Proposition~\ref{prop:AXA*},
we get
$$
\delta\Pa{W-ZZ^\dagger} =
\delta\Pa{\sqrt{W}\Pa{\I-YY^\dagger}\sqrt{W}} =
\frac1{\det(W)^m}\delta\Pa{\I-YY^\dagger}.
$$
Therefore
\begin{eqnarray}
\bP(W) &=& \frac1{\pi^{mn}} \det^{n-m}(W)e^{-\Tr{W}}\int
\delta\Pa{\I_m-YY^\dagger}[\dif Y]\propto \det^{n-m}(W)e^{-\Tr{W}},
\end{eqnarray}
where $\int \delta\Pa{\I-YY^\dagger}[\dif Y]:=C$ is a constant,
independent of $W$.

\begin{prop}\label{prop:stiefel}
It holds that
\begin{eqnarray}
\int \delta\Pa{\I_m-YY^\dagger}[\dif Y] =
\frac{\pi^{\frac12m(2n-m+1)}}{\prod^m_{k=1}(n-k)!}.
\end{eqnarray}
In particular, (i) for $m=1$,
\begin{eqnarray}
\int \delta\Pa{1-\inner{\bsy}{\bsy}}[\dif \bsy] =
\frac{\pi^n}{(n-1)!} = \frac{\pi^n}{\Gamma(n)}.
\end{eqnarray}
(ii) for $m=n$, we have
\begin{eqnarray}
\int \delta\Pa{\I_n-YY^\dagger}[\dif Y] =
\frac{\pi^{\frac12n(n+1)}}{\prod^n_{k=1}\Gamma(k)} =
2^{-n}\cdot\vol(\rU(n)).
\end{eqnarray}
\end{prop}

\begin{proof}
Since
\begin{eqnarray}
1=\int \bP(W)[\dif W] = \frac
C{\pi^{mn}}\int_{W>0}\det^{n-m}(W)e^{-\Tr{W}}[\dif W],
\end{eqnarray}
where via $W=UwU^\dagger$ for $w=\diag(w_1,\ldots,w_m)$ where
$w_1>\cdots>w_m$
$$
[\dif W] = \Delta(w)^2[\dif w][U^\dagger \dif U].
$$
It follows that
\begin{eqnarray}
1 = \frac C{\pi^{mn}}\vol(\rU(m)/\mathbb{T}^m)\int_{w_1>\cdots>w_m}
\Delta(w)^2\det^{n-m}(w)e^{-\Tr{w}}[\dif w].
\end{eqnarray}
By using Selberg integral formula, we see that
\begin{eqnarray}
\int \Delta(w)^2\det^{n-m}(w)e^{-\Tr{w}}[\dif w] =
\prod^m_{k=1}k!(n-k)!.
\end{eqnarray}
Thus
\begin{eqnarray}
\int_{w_1>\cdots>w_m} \Delta(w)^2\det^{n-m}(w)e^{-\Tr{w}}[\dif w]
=\frac1{m!} \prod^m_{k=1}k!(n-k)!.
\end{eqnarray}
We conclude that
\begin{eqnarray}
\int \delta\Pa{\I-YY^\dagger}[\dif Y] =
\frac{\vol(\mathbb{T}^m)m!\pi^{mn}}{\vol(\rU(m))\prod^m_{k=1}k!(n-k)!}.
\end{eqnarray}
Now
$$
\vol(\rU(m)) =
\frac{2^m\pi^{\frac{m(m+1)}2}}{\prod^m_{k=1}\Gamma(k)}\quad\text{and}\quad
\vol(\mathbb{T}^m) = (2\pi)^m.
$$
We obtain that
\begin{eqnarray}
\int \delta\Pa{\I-YY^\dagger}[\dif Y] =
\frac{\pi^{\frac12m(2n-m+1)}}{\prod^m_{k=1}(n-k)!}.
\end{eqnarray}
This completes the proof.
\end{proof}

\begin{cor}
It holds that
\begin{eqnarray}
\int \delta\Pa{W-ZZ^\dagger}[\dif Z] =
\frac{\pi^{\frac12m(2n-m+1)}}{\prod^m_{k=1}(n-k)!}\det(W)^{n-m}.
\end{eqnarray}
\end{cor}

\begin{proof}
It is easily seen that
\begin{eqnarray}
\int \delta\Pa{W-ZZ^\dagger}[\dif Z] &=&\det(W)^{n-m}\int
\delta\Pa{\I_m-YY^\dagger}[\dif Y]\\
&=& \frac{\pi^{\frac12m(2n-m+1)}}{\prod^m_{k=1}(n-k)!}\det(W)^{n-m}.
\end{eqnarray}
Here we used the result in Proposition~\ref{prop:stiefel}.
\end{proof}

\begin{remark}
Denote $\rU(m,n):=\Set{Z\in\complex^{m\times n}:
ZZ^\dagger=\I_m}(m\leqslant n)$. Note that
$$
\vol(\rU(m,n)) = \int \delta\Pa{\I_m - ZZ^\dagger}[Z^\dagger\dif Z]
= \frac{2^m\pi^{\frac12m(2n-m+1)}}{\prod^m_{k=1}(n-k)!}.
$$
This indicates that
$$
\int \delta\Pa{\I_m - ZZ^\dagger}[Z^\dagger\dif Z] = 2^m \int
\delta\Pa{\I_m - ZZ^\dagger}[\dif Z].
$$
In addition, the delta integral can be reformulated in terms of
another form:
$$
\int \delta\Pa{W - ZZ^\dagger}[\dif Z] = \int_{ZZ^\dagger=W}[\dif
Z].
$$
Finally, we have seen that
$$
\bP(W) =
\frac{\det(W)^{n-m}e^{-\Tr{W}}}{\pi^{\binom{m}{2}}\prod^m_{k=1}(n-k)!}.
$$
\end{remark}

\subsection{Joint distribution of eigenvalues of induced random quantum state ensemble}

Any mixed state $\rho$ (i.e., nonnegative complex matrix of
trace-one) acting on $\cH_m$, may be purified by finding a pure
state $\ket{X}$ in the composite Hilbert space $\cH_m\ot\cH_m$, such
that $\rho$ is given by the partial tracing over the auxiliary
subsystem,
\begin{eqnarray}
\ket{X}\longrightarrow \rho=\Ptr{2}{\proj{X}}.
\end{eqnarray}
In a loose sense, the purification corresponds to treating any
density matrix of size $m$ as a vector of size $m^2$.

Consider a bipartite $m\ot n$ composite quantum system. Pure states
of this system $\ket{X}$ may be represented by a rectangular complex
matrix $X$. The partial tracing with respect to the $n$-dimensional
subspace gives a reduced density matrix of size $m$:
$\rho=\Ptr{n}{\proj{X}}$. The natural measure in the space of
$mn$-dimensional pure states induces the measure $P_{m,n}(\rho)$ in
the space of the reduced density matrices of size $m$.

Without loss of generality, we assume that $m\leqslant n$, then
$\rho$ is \emph{generically} positive definite (here \red{something
is generic means it holds with probability one}). In any case, we
are only interested in the distribution of the positive eigenvalues.
Let us call the corresponding positive reduced density matrix again
$\rho = XX^\dagger$, where $X$ is a $m\times n$ matrix. First we
calculate the distribution of matrix elements
\begin{eqnarray}
\bP(\rho)\propto\int[\dif X] \delta\Pa{\rho -
XX^\dagger}\delta\Pa{1-\Tr{XX^\dagger}}
\end{eqnarray}
where the first delta function is a delta function of a Hermitian
matrix and in the second delta function $\Tr{XX^\dagger}$ may be
substituted by $\Tr{\rho}$. Since $\rho$ is positive definite we can
make a transformation
\begin{eqnarray}
X=\sqrt{\rho}\widetilde X,
\end{eqnarray}
it follows that $[\dif X]=\Pa{\det\rho}^n[\dif\widetilde X]$
\cite{Zhang2015vol}. The matrix delta function may now be written as
\begin{eqnarray}
\delta\Pa{\sqrt{\rho}\Pa{\I-\widetilde X\widetilde
X^\dagger}\sqrt{\rho}} = (\det\rho)^{-m}\delta\Pa{\I-\widetilde
X\widetilde X^\dagger}.
\end{eqnarray}
As the result the
distribution of matrix elements is given by
\begin{eqnarray}
\bP(\rho)\propto\theta(\rho)
\delta\Pa{1-\Tr{\rho}}\Pa{\det\rho}^{n-m}
\end{eqnarray}
where the theta function assures that $\rho$ is positive definite.
It is then easy to show by the methods of random matrix theory that
the joint density of eigenvalues
$\Lambda=\set{\lambda_1,\ldots,\lambda_m}$ of $\rho$ is given by
\begin{eqnarray}
\bP_{m,n}(\lambda_1,\ldots,\lambda_m)\propto\delta\Pa{1-\sum^m_{j=1}\lambda_j}\prod^m_{j=1}\lambda^{n-m}_j\theta(\lambda_j)
\prod_{1\leqslant i<j\leqslant m}(\lambda_i-\lambda_j)^2.
\end{eqnarray}
This result is firstly obtained by \.{Z}yczkowski \cite{Karol01} in
2001.

\subsection{Joint distribution of eigenvalues of two Wishart matrices}

The second application is to calculate the distribution of the sum
of a finite number of complex Wishart matrices taken from the same
Wishart ensemble \cite{James1964}. The distribution of the sum of
two Wishart matrices is considered in \cite{Kumar2014}. Let us
consider two independent complex matrices $A$ and $B$ of dimensions
$m\times n_A$ and $m\times n_B$ taken, respectively, from the
distributions
\begin{eqnarray}
\mathbf{P}_A(A) &=&
\frac1{\pi^{mn_A}\det^{n_A}(\Sigma_A)}\exp\Pa{-\Tr{\Sigma^{-1}_AAA^\dagger}},\\
\mathbf{P}_B(B) &=&
\frac1{\pi^{mn_B}\det^{n_B}(\Sigma_B)}\exp\Pa{-\Tr{\Sigma^{-1}_BBB^\dagger}}.
\end{eqnarray}
Here $\Sigma_A,\Sigma_B$ are the covariance matrices. Since the
domains of $A$ and $B$ remain invariant under unitary rotation,
without loss of generality, we may take
$\Sigma_A=\diag(\sigma_{A1},\ldots,\sigma_{Am})$ and
$\Sigma_B=\diag(\sigma_{B1},\ldots,\sigma_{Bm})$. We assume that
$m\leqslant n_A,n_B$. And we have
$$
\int[\dif A]\mathbf{P}_A(A)=1\quad\text{and}\quad\int[\dif
B]\mathbf{P}_B(B)=1.
$$
The matrix $AA^\dagger$ and $BB^\dagger$ are then $n$-variate
complex-Wishart-distributed, i.e., $AA^\dagger\sim
W^\complex_m(n_A,\Sigma_A)$ and $BB^\dagger\sim
W^\complex_m(n_B,\Sigma_B)$.

Ones are interested in the statistics of the ensemble of $m\times m$
Hermitian matrices
$$
W = AA^\dagger+BB^\dagger.
$$
The distribution of $W$ can be obtained as
\begin{eqnarray}
\mathbf{P}_W(W) = \int[\dif A]\int[\dif B] \delta\Pa{W - AA^\dagger
- BB^\dagger}\mathbf{P}_A(A)\mathbf{P}_B(B).
\end{eqnarray}
In what follows, our method will be different from Kumar's in
\cite{Kumar2014}. Let $A=\sqrt{W}\widetilde A$ and
$B=\sqrt{W}\widetilde B$. Then $[\dif A] = \det(W)^{n_A}[\dif
\widetilde A]$ and $[\dif B] = \det(W)^{n_B}[\dif \widetilde B]$. We
also have
$$
\delta\Pa{W - AA^\dagger - BB^\dagger} = \det(W)^{-m}\delta\Pa{\I -
\widetilde A\widetilde A^\dagger - \widetilde B\widetilde
B^\dagger}.
$$
Then
\begin{eqnarray*}
\mathbf{P}_W(W) &=&
\frac{\det^{n_A+n_B-m}(W)}{\pi^{m(n_A+n_B)}\det^{n_A}(\Sigma_A)\det^{n_B}(\Sigma_B)}\int[\dif
\widetilde A]\int[\dif \widetilde B] \\
&&\times\delta\Pa{\I - \widetilde A\widetilde A^\dagger - \widetilde
B\widetilde
B^\dagger}e^{-\Tr{\sqrt{W}\Sigma^{-1}_A\sqrt{W}\widetilde
A\widetilde A^\dagger}-\Tr{\sqrt{W}\Sigma^{-1}_B\sqrt{W}\widetilde
B\widetilde B^\dagger}}.
\end{eqnarray*}
That is,
\begin{eqnarray}
\mathbf{P}_W(W) &\propto&\det^{n_A+n_B-m}(W) \notag\\
&&\times \int[\dif \widetilde A]\int[\dif \widetilde B] \delta\Pa{\I
- \widetilde A\widetilde A^\dagger - \widetilde B\widetilde
B^\dagger}e^{-\Tr{\sqrt{W}\Sigma^{-1}_A\sqrt{W}\widetilde
A\widetilde A^\dagger+\sqrt{W}\Sigma^{-1}_B\sqrt{W}\widetilde
B\widetilde B^\dagger}}.
\end{eqnarray}
Now consider the reduction of the sum
$\Tr{\sqrt{W}\Sigma^{-1}_A\sqrt{W}\widetilde A\widetilde
A^\dagger+\sqrt{W}\Sigma^{-1}_B\sqrt{W}\widetilde B\widetilde
B^\dagger}$. Since $\I = \widetilde A\widetilde A^\dagger+\widetilde
B\widetilde B^\dagger$, it follows that
\begin{eqnarray}
&&\Tr{\sqrt{W}\Sigma^{-1}_A\sqrt{W}\widetilde A\widetilde
A^\dagger}+\Tr{\sqrt{W}\Sigma^{-1}_B\sqrt{W}\widetilde B\widetilde
B^\dagger}\\
&&= \Tr{W\Sigma^{-1}_A} +
\Tr{\sqrt{W}(\Sigma^{-1}_B-\Sigma^{-1}_A)\sqrt{W}\widetilde
B\widetilde B^\dagger}\\
&&=\Tr{W\Sigma^{-1}_B} +
\Tr{\sqrt{W}(\Sigma^{-1}_A-\Sigma^{-1}_B)\sqrt{W}\widetilde
A\widetilde A^\dagger}.
\end{eqnarray}
Hence
\begin{eqnarray}
\mathbf{P}_W(W)
&\propto&\frac{\det^{n_A+n_B-m}(W)}{e^{\Tr{W\Sigma^{-1}_A}}}Q(W),
\end{eqnarray}
where
\begin{eqnarray*}
Q(W)&:=&\int[\dif \widetilde A]\int[\dif \widetilde B] \delta\Pa{\I
- \widetilde A\widetilde A^\dagger - \widetilde B\widetilde
B^\dagger}e^{\Tr{\sqrt{W}(\Sigma^{-1}_A-\Sigma^{-1}_B)\sqrt{W}\widetilde
B\widetilde B^\dagger}}\\
&=&\int_{0<\widetilde A\widetilde A^\dagger<\I_m}[\dif \widetilde
A]e^{\Tr{\sqrt{W}(\Sigma^{-1}_A-\Sigma^{-1}_B)\sqrt{W}(\I -
\widetilde A\widetilde A^\dagger)}}\\
&=&e^{\Tr{W(\Sigma^{-1}_A-\Sigma^{-1}_B)}}\int_{0<\widetilde
A\widetilde A^\dagger<\I_m}[\dif \widetilde
A]e^{\Tr{\sqrt{W}(\Sigma^{-1}_B-\Sigma^{-1}_A)\sqrt{W}\widetilde
A\widetilde A^\dagger}}
\end{eqnarray*}
That is,
\begin{eqnarray}
\mathbf{P}_W(W)
&\propto&\frac{\det^{n_A+n_B-m}(W)}{e^{\Tr{W\Sigma^{-1}_B}}}\int_{0<\widetilde
A\widetilde A^\dagger<\I_m}[\dif \widetilde
A]e^{\Tr{\sqrt{W}(\Sigma^{-1}_B-\Sigma^{-1}_A)\sqrt{W}\widetilde
A\widetilde A^\dagger}}.
\end{eqnarray}
Similarly, we have
\begin{eqnarray}
\mathbf{P}_W(W)
&\propto&\frac{\det^{n_A+n_B-m}(W)}{e^{\Tr{W\Sigma^{-1}_A}}}\int_{0<\widetilde
B\widetilde B^\dagger<\I_m}[\dif \widetilde
B]e^{\Tr{\sqrt{W}(\Sigma^{-1}_A-\Sigma^{-1}_B)\sqrt{W}\widetilde
B\widetilde B^\dagger}}.
\end{eqnarray}
In fact, we have
\begin{eqnarray*}
&&\frac1{e^{\Tr{W\Sigma^{-1}_B}}}\int_{0<\widetilde A\widetilde
A^\dagger<\I_m}[\dif \widetilde
A]e^{\Tr{\sqrt{W}(\Sigma^{-1}_B-\Sigma^{-1}_A)\sqrt{W}\widetilde
A\widetilde
A^\dagger}}\\
&&=\frac1{e^{\Tr{W\Sigma^{-1}_A}}}\int_{0<\widetilde B\widetilde
B^\dagger<\I_m}[\dif \widetilde
B]e^{\Tr{\sqrt{W}(\Sigma^{-1}_A-\Sigma^{-1}_B)\sqrt{W}\widetilde
B\widetilde B^\dagger}}.
\end{eqnarray*}
Let $\Lambda =\sqrt{W}(\Sigma^{-1}_A-\Sigma^{-1}_B)\sqrt{W}$. By
Gram-Schmidt orthogonalization procedure to write $\widetilde A=
TU_1$, where $U_1$ is $m\times n_A$ matrix such that
$U_1U^\dagger_1=\I_m$ and $T$ is a $m\times m$ lower triangular
matrix with diagonal entries real and positive. Then
$$
[\dif \widetilde A] = \prod^m_{j=1}t^{2(n_A-j)+1}_{jj}[\dif T][\dif
U_1 U^\dagger],
$$
where $U$ is the enlarged $n_A\times n_A$ unitary matrix of $U_1$.
Let $X=\widetilde A\widetilde A^\dagger = TT^\dagger$. We then have
\cite{Zhang2015vol}:
$$
[\dif X] = 2^m\prod^m_{j=1}t^{2(m-j)+1}_{jj}[\dif T].
$$
Combining together the above results gives that \cite{Zhang2015vol}
\begin{eqnarray}
[\dif \widetilde A] = 2^{-m}\det^{n_A-m}(X)[\dif X][\dif U_1
U^\dagger].
\end{eqnarray}
This means that
\begin{eqnarray}
\int_{0<\widetilde A\widetilde A^\dagger<\I_m}[\dif \widetilde
A]e^{\Tr{\Lambda \widetilde A \widetilde A^\dagger}} &=&
2^{-m}\int^{\I_m}_0 [\dif X]\det^{n_A-m}(X)e^{\Tr{\Lambda
X}}\int[\dif
U_1 U^\dagger]\\
&\propto&\int^{\I_m}_0 [\dif X]\det^{n_A-m}(X)e^{\Tr{\Lambda X}}.
\end{eqnarray}
Similarly,
\begin{eqnarray}
\int_{0<\widetilde B\widetilde B^\dagger<\I_m}[\dif \widetilde
B]e^{-\Tr{\Lambda \widetilde B \widetilde
B^\dagger}}\propto\int^{\I_m}_0 [\dif
X]\det^{n_B-m}(X)e^{-\Tr{\Lambda X}}.
\end{eqnarray}

\begin{thrm}
It holds that
\begin{eqnarray}
\bP_W(W) \propto
\frac{\det^{n_A+n_B-m}(W)}{e^{\Tr{W\Sigma^{-1}_B}}}\int^{\I_m}_0
[\dif X]\det^{n_A-m}(X)e^{\Tr{\Lambda X}}
\end{eqnarray}
and
\begin{eqnarray}
\bP_W(W) \propto
\frac{\det^{n_A+n_B-m}(W)}{e^{\Tr{W\Sigma^{-1}_A}}}\int^{\I_m}_0
[\dif X]\det^{n_B-m}(X)e^{-\Tr{\Lambda X}}.
\end{eqnarray}
Moreover
\begin{eqnarray}
\frac1{e^{\Tr{W\Sigma^{-1}_B}}}\int^{\I_m}_0 [\dif
X]\det^{n_A-m}(X)e^{\Tr{\Lambda
X}}=\frac1{e^{\Tr{W\Sigma^{-1}_A}}}\int^{\I_m}_0 [\dif
X]\det^{n_B-m}(X)e^{-\Tr{\Lambda X}}.
\end{eqnarray}
In particular, if $\Lambda=0$ (i.e., $\Sigma_A=\Sigma_B:=\Sigma$),
then
\begin{eqnarray}
\bP_W(W) \propto \frac{\det^{n_A+n_B-m}(W)}{e^{\Tr{W\Sigma^{-1}}}}.
\end{eqnarray}
\end{thrm}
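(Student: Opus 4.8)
The bulk of the argument has in fact been carried out in the discussion preceding the statement, so the plan is to assemble those ingredients into a clean proof. Starting from
$$\mathbf{P}_W(W) = \int[\dif A]\int[\dif B]\,\delta\Pa{W - AA^\dagger - BB^\dagger}\mathbf{P}_A(A)\mathbf{P}_B(B),$$
I would first perform the substitution $A = \sqrt{W}\,\widetilde A$, $B = \sqrt{W}\,\widetilde B$, contributing Jacobian factors $\det(W)^{n_A}$ and $\det(W)^{n_B}$, and apply Proposition~\ref{prop:AXA*} to the delta function to get $\delta\Pa{W - AA^\dagger - BB^\dagger} = \det(W)^{-m}\,\delta\Pa{\I - \widetilde A\widetilde A^\dagger - \widetilde B\widetilde B^\dagger}$. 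Collecting the Gaussian weights together with these Jacobian factors isolates the prefactor $\det^{n_A+n_B-m}(W)$ and reduces everything to analyzing $Q(W)$, the integral over $\widetilde A,\widetilde B$ against the constraint delta function.

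Next I would use the fact that on the support of $\delta\Pa{\I - \widetilde A\widetilde A^\dagger - \widetilde B\widetilde B^\dagger}$ one has $\widetilde A\widetilde A^\dagger + \widetilde B\widetilde B^\dagger = \I_m$, so the exponent $\Tr{\sqrt{W}\Sigma^{-1}_A\sqrt{W}\widetilde A\widetilde A^\dagger + \sqrt{W}\Sigma^{-1}_B\sqrt{W}\widetilde B\widetilde B^\dagger}$ can be rewritten in two symmetric ways: as $\Tr{W\Sigma^{-1}_A}$ plus a term linear in $\widetilde B\widetilde B^\dagger$, or as $\Tr{W\Sigma^{-1}_B}$ plus a term linear in $\widetilde A\widetilde A^\dagger$, the residual term always involving $\Lambda = \sqrt{W}(\Sigma^{-1}_A-\Sigma^{-1}_B)\sqrt{W}$. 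Integrating out the remaining matrix against the constraint delta function then collapses $Q(W)$ to a single Hermitian-matrix integral of the form $\int_{0 < YY^\dagger < \I_m}[\dif Y]\,e^{\pm\Tr{\Lambda YY^\dagger}}$ with $Y = \widetilde A$ (and symmetrically with $Y = \widetilde B$).

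The last and most delicate step is the change of variables in this truncated matrix integral. Writing $\widetilde A = T U_1$ by Gram--Schmidt, with $U_1 U_1^\dagger = \I_m$ and $T$ lower triangular with positive diagonal, and then passing to $X = \widetilde A\widetilde A^\dagger = T T^\dagger$, the Jacobian identities recorded in \cite{Zhang2015vol} give $[\dif\widetilde A] = 2^{-m}\det^{n_A-m}(X)[\dif X][\dif U_1 U^\dagger]$; the unitary part integrates to a constant independent of $W$, leaving $\int^{\I_m}_0[\dif X]\,\det^{n_A-m}(X)e^{\Tr{\Lambda X}}$ from the $\widetilde A$ route and, likewise, $\int^{\I_m}_0[\dif X]\,\det^{n_B-m}(X)e^{-\Tr{\Lambda X}}$ from the $\widetilde B$ route. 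Substituting these back yields the two displayed formulas for $\mathbf{P}_W(W)$; since both arise from the same $Q(W)$, equating them gives the stated identity, and specializing to $\Lambda = 0$ (that is, $\Sigma_A = \Sigma_B = \Sigma$) turns the truncated integral into yet another $W$-independent constant, producing $\mathbf{P}_W(W)\propto \det^{n_A+n_B-m}(W)/e^{\Tr{W\Sigma^{-1}}}$. I expect the main obstacle to be the careful bookkeeping of the Jacobian factors through the QR substitution and checking that the domain $0 < X < \I_m$ is correctly inherited from the constraint $\widetilde A\widetilde A^\dagger + \widetilde B\widetilde B^\dagger = \I_m$.
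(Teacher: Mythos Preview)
Your proposal is correct and follows exactly the approach taken in the paper: the substitution $A=\sqrt{W}\widetilde A$, $B=\sqrt{W}\widetilde B$ together with Proposition~\ref{prop:AXA*}, the two symmetric rewritings of the exponent using the constraint $\widetilde A\widetilde A^\dagger+\widetilde B\widetilde B^\dagger=\I_m$, and the passage to $X=\widetilde A\widetilde A^\dagger$ via the QR/Gram--Schmidt Jacobian from \cite{Zhang2015vol} are precisely the steps the paper carries out in the discussion preceding the theorem. As you note, the paper's ``proof'' is entirely contained in that preceding derivation, and your plan simply reassembles it.
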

In the following, we calculate, via $X=U\diag (x)U^\dagger$ where
$\diag(x) = \diag(x_1,\ldots,x_m)$,
\begin{eqnarray}
\int^{\I_m}_0 [\dif X]\det^{n_A-m}(X)e^{\Tr{\Lambda X}} \propto
\int^1_0\cdots\int^1_0 \Delta(x)^2\prod^m_{j=1}x^{n_A-m}_j\dif x_j
\int\dif\mu_{\mathrm{Haar}}(U)e^{\Tr{\Lambda U\diag (x)U^\dagger}}
\end{eqnarray}
W.l.o.g, we assume that $\Lambda$ is in the diagonal form, i.e.
$\Lambda = \diag(\lambda_1,\ldots,\lambda_m)$ where each
$\lambda_j\in\real$. By Harish-Chandra-Itzykson-Zuber integral
formula \cite{Harish1958,Itzykson1980},
\begin{eqnarray}
\int_{\rU(m)}\dif\mu_{\mathrm{Haar}}(U)e^{\Tr{A UBU^\dagger}} =
\Pa{\prod^m_{j=1}\Gamma(j)}\frac{\det\Br{\exp(a_ib_j)}}{\Delta(a)\Delta(b)},
\end{eqnarray}
it follows that
\begin{eqnarray}
\int\dif\mu_{\mathrm{Haar}}(U)e^{\Tr{\Lambda U\diag (x)U^\dagger}} =
\Pa{\prod^m_{j=1}\Gamma(j)}\frac{\det\Br{\exp(\lambda_ix_j)}}{\Delta(\lambda)\Delta(x)}.
\end{eqnarray}
Therefore
\begin{eqnarray}
\int^{\I_m}_0 [\dif X]\det^{n_A-m}(X)e^{\Tr{\Lambda X}} &\propto&
\frac1{\Delta(\lambda)}\int^1_0\cdots\int^1_0
\Delta(x)\det\Br{\exp(\lambda_ix_j)}\prod^m_{j=1}x^{n_A-m}_j\dif
x_j\\
&\propto&\frac1{\Delta(\lambda)}\int^1_0\cdots\int^1_0
\abs{\Delta(x)}\prod^m_{j=1}e^{\lambda_jx_j}x^{n_A-m}_j\dif x_j,
\end{eqnarray}
implying that
\begin{eqnarray}
\int^{\I_m}_0 [\dif X]\det^{n_A-m}(X)e^{\Tr{\Lambda X}}
&\propto&\frac1{\Delta(\lambda)}\Pa{\prod^m_{j=1}\partial^{n_A-m}_{\lambda_j}}\int^1_0\cdots\int^1_0
\abs{\Delta(x)}\prod^m_{j=1}e^{\lambda_jx_j}\dif x_j.
\end{eqnarray}

\begin{remark}
Kumar in \cite{Kumar2014} have presented analytical formula for
distribution of the sum of two Wishart matrices in terms of the
confluent hypergeometric function of matrix argument
\cite{Mathai1997}, which is defined by
\begin{eqnarray}
_1F_1(a;c;-\Lambda) =
\frac{\Gamma_p(c)}{\Gamma_p(a)\Gamma_p(c-a)}\int^{\I_p}_0
\det(X)^{a-p}\det(\I-X)^{c-a-p}e^{-\Tr{\Lambda X}}[\dif X],
\end{eqnarray}
where
$\Gamma_p(a):=\pi^{\frac{p(p-1)}2}\Gamma(a)\Gamma(a-1)\cdots\Gamma(a-p+1)$.
Kummer's formula for the confluent hypergeometric function $_1F_1$
is given by
\begin{eqnarray}
_1F_1(a;c;-\Lambda) = e^{-\Tr{\Lambda}}\cdot\ _1F_1(c-a;c;\Lambda).
\end{eqnarray}

\end{remark}

\begin{remark}
We can present a simple approach to the similar result. Denote
$W^\complex_m(n,\Sigma)$ Wishart matrix ensemble for which each
matrix is of the form $W=AA^*$, where $A\in\complex^{m\times
n}(m\leqslant n)$. Now $W=W_1+W_2$, where $W_1,W_2\in
W^\complex_m(n,\Sigma)$, can be rewritten as
$$
W=ZZ^\dagger, Z=[A,B]\in\complex^{m\times 2n}.
$$
Then $W\sim W^\complex_m(2n,\Sigma)$. Thus
$$
\mathbf{P}_W(W)\propto \det^{2n-m}(W)e^{-\Tr{\Sigma^{-1} W}}.
$$
The distribution of sum of an arbitrary finite number of Wishart
matrices can be derived as:
$$
\mathbf{P}_W(W)\propto \det^{kn-m}(W)e^{-\Tr{\Sigma^{-1} W}},
$$
where $W=\sum^k_{j=1}W_j$ for $W_j\sim W^\complex_m(n,\Sigma)$.
\end{remark}

%=============================================================================%

%=============================================================================%

\end{document}